\newtheorem{theorem}{Theorem}
\begin{document}

\title{Joint Antenna Selection and Beamforming Design for Active RIS-aided ISAC Systems}

\author{Wei Ma,~\IEEEmembership{Student Member,~IEEE,}
	 Peichang Zhang,~\IEEEmembership{Member,~IEEE,}
	 Junjie Ye,~\IEEEmembership{Student Member,~IEEE,}
	 \\Rouyang Guan,
	 Xiao-Peng Li,~\IEEEmembership{Member,~IEEE,}
	 Lei Huang,~\IEEEmembership{Senior Member,~IEEE}
	 
\thanks{This work was supported by the Project of Department of Education of Guangdong Province (No. 2021KCXTD008), and in part by the Foundation of Guangdong Key Areas of "Service for Rural Revitalization Plan" (No. 2019KZDZX2014,2020ZDZX1037), and in part by the Foundation of Shenzhen (No.  20200823154213001).

Wei Ma, Peichang Zhang, Junjie Ye, Rouyang Guan, Xiao-Peng Li and Lei Huang are with the State Key Laboratory of Radio Frequency Heterogeneous Integration (Shenzhen University), Shenzhen University, Shenzhen 518060, China (e-mail: 2200432084@
email.szu.edu.cn; pzhang@szu.edu.cn; 2152432003@email.szu.edu.cn; 2200432088@email.szu.edu.cn; x.p.li.frank@gmail.com; lhuang@szu.edu.cn). 

}

}



\maketitle

\begin{abstract}
Active reconfigurable intelligent surface (A-RIS) aided integrated sensing and communications (ISAC) system has been considered as a promising paradigm to improve spectrum efficiency. However, massive energy-hungry radio frequency (RF) chains hinder its large-scale deployment. To address this issue, an A-RIS-aided ISAC system with antenna selection (AS) is proposed in this work, where a target is sensed while multiple communication users are served with specifically selected antennas. Specifically, a cuckoo search-based scheme is first utilized to select the antennas associated with high-gain channels. Subsequently, with the properly selected antennas, the weighted sum-rate (WSR) of the system is optimized under the condition of radar probing power level, power budget for the A-RIS and transmitter. To solve the highly non-convex optimization problem, we develop an efficient algorithm based on weighted minimum mean square error (WMMSE) and fractional programming (FP). Simulation results show that the proposed AS scheme and the algorithm are effective, which reduce the number of RF chains without significant performance degradation. 
\end{abstract}

\begin{IEEEkeywords}
Active reconfigurable intelligent surface (A-RIS), integrated sensing and communications   (ISAC), Beamforming Design, Antenna Selection (AS), Cuckoo Search, Alternating Optimization.
\end{IEEEkeywords}

\section{Introduction}
\IEEEPARstart{I}{n} the next generation mobile communications, spectrum resources are expected to become increasingly scarce due to the explosive growth of communication devices, challenging the improvement of spectrum efficiency \cite{zpc}. To address it,  a potential technology, integrated sensing and communications (ISAC)  has garnered significant attentions and been studied.  

The concept of ISAC suggests that the relatively abundant radar spectrum can be shared with the congested communication sector \cite{terminologies}. It is possible to achieve functions simultaneously within the same spectrum using the same equipment, since the radar and communication are typically similar in terms of hardware and signal processing. The ISAC can be achieved in different levels of integration, including coexistence, cooperation, and co-design \cite{isac_3_theme}. In coexistence, radar and communication systems merely share the same spectrum, primarily focusing on mitigating mutual interference \cite{coexistence_he}. For further interference cancellation, the two systems cooperate and share knowledge to assist in design \cite{cooperation}. The co-design aims to design a unique  dual-functional radar communication (DFRC) systems, which shares the spectrum, the platform and waveform \cite{waveform_design}. Due to its high level of integration, the design of DFRC has garnered extensive attentions. 

Building upon the foundation of ISAC, researches has explored ways to enhance systems performance and address existing challenges. Although ISAC achieves unified sensing and communication by sharing radar spectrum and hardware, its adaptability and resource utilization in complex environments still remain improvable. At this point, the introduction of reconfigurable intelligent surfaces (RIS) offers a novel solution for enhancing ISAC systems. An RIS, composed of numerous controllable reflecting elements, dynamically optimizes the propagation environment by adjusting the phase of incident signals \cite{RISsuveryLYW}. By integrating RIS into ISAC systems, it is possible to improve spectrum utilization, signal quality and system reliability, further advancing ISAC technology.

However, the performance of conventional RIS is often constrained by the path loss. Specifically, the equivalent total path loss of the transmitter-RIS-receiver link is the product of the path losses of the transmitter-RIS and RIS-receiver links, known as the multiplicative fading effect. As a result, RIS performance gains are often limited in several scenarios, restricting its practical applicability. To address this issue,  Active RIS (A-RIS) has been proposed. By incorporating amplifiers into the phase shifters, A-RIS can simultaneously amplify the incident signals and adjust phase shifts \cite{ARISsuvery}. This design effectively reduces the path loss and alleviates the impact of the multiplicative fading effect, providing a more efficient technological solution for ISAC systems.

Finally, for the practical deployment of RIS-assisted ISAC systems, the hardware cost and power consumption of the system require careful consideration. Traditional RIS-aided ISAC systems with a dedicated radio frequency (RF) chain at each antenna incur substantial hardware costs and significant energy consumption. To mitigate these challenges, numerous strategies have been proposed in \cite{AS1}, \cite{AS2}, and \cite{AS3}. Among these, antenna selection (AS) has emerged as a promising method and has garnered significant attention. With the advancements in antenna manufacturing technologies, the cost of antennas has become remarkably low, whereas the cost of RF chains remains high \cite{COST}. By offering a greater number of antenna combination options for the entire system, AS can achieve enhanced performance while maintaining a relatively low overall system cost.

Building upon the aforementioned works, this study aims to achieve enhanced communication performance within the A-RIS-assisted ISAC systems while simultaneously minimizing the number of RF chains to reduce system costs. Specifically, the contributions of this article are as follows:
\begin{enumerate}[label=\arabic*)]
	
	\item \textbf{Introduction of AS to Reduce RF Chains Complexity:} A dedicated AS scheme is proposed to reduce the number of RF chains in the system while maintaining robust communication and sensing performance. Moreover, the scheme achieves superior WSR  with the same number of antennas.
	
	\item \textbf{Joint Optimization of Multiple Components in A-RIS-aided DFRC Systems:} In the A-RIS-aided DFRC system, a joint optimization problem is formulated, which jointly optimizes the transmit AS scheme, transmit beamformer, amplification matrix, and phase shift matrix of the A-RIS. The optimization is performed is subject to constraints such as including radar detection power, constant-modulus transmit waveform, transmit power budget, and A-RIS power budget.
	
	\item \textbf{Stepwise Solution and Alternating Optimization for the Model Problem:}
	The problem is solved using a two-step approach. First, a superior antenna set is selected, and then the transmit beamformer and the A-RIS beamformer are alternately optimized in an iterative manner to efficiently maximize WSR. Specifically, we optimize the DFRC transmit beamformer when the amplification matrix and phase shift matrix of the A-RIS are fixed. We convert the original WSR maximization problem to the weighted minimum meansquare error (WMMSE) problem and then transform the problem into a quadratic constraint quadratic problem (QCQP) problem, which can solved by semi-definite relaxation (SDR).When the transmit beamformer is fixed, the problem becomes logarithmic fractional programming (FP) that can be converted into a standard QCQP problem and easily solved.
	
	\item \textbf{Validation of the Proposed Design's Effectiveness:} Numerical results validate the effectiveness of the proposed AS scheme and beamforming design in A-RIS-assisted ISAC, demonstrating significant performance improvements and providing new insights for the design of similar systems.
\end{enumerate}

The structure of this paper is as follows. Section \ref{Related_work} provides a summary of the related work. Section \ref{model_formulation} introduces the system model. Section \ref{algorithm} describes the proposed algorithm in detail, where the transmit beamformer for DFRC is designed with cuckoo search-based AS, based on the problem formulation and the beamforming matrix for A-RIS. Section \ref{simulation} presents the simulation settings and results. Finally, Section \ref{conclusion} presents the conclusions.

\textbf{Notation}. Boldface lowercase and uppercase letters denote vectors and matrices, respectively. $\left(\cdot \right)^H$, $\left(\cdot \right)^T$ and  $\left(\cdot \right)^*$ represent the conjugate transpose, transpose, and conjugate operators, respectively. The operator $\text{tr}\left(\cdot \right)$ stands for the trace of a matrix. The operator $\text{diag}\left(\mathbf{A} \right)$ attains a vector whose entries are the diagonal elements of a matrix, while $\text{diag}\left(\mathbf{a} \right)$ obtains a diagonal matrix whose diagonal elements are the elements in the vector $\mathbf{a}$. The symbols $\left|\cdot\right|$ and $\left\|\cdot \right\|$ denote absolute value and norm operations, respectively. We use $\mathbb{C}^{M \times N}$ for the complex space of $M \times N$ dimensions. $\mathcal{CN} \left( 0, \sigma^2 \right)$ denotes that a random variable satisfies a  complex Gaussian distribution with zero mean and $\sigma^2$ variance. Finally, we refer to $\mathfrak{R}\left(x \right)$ and $\angle x$ as the real part of the complex number $x$ and the phase of the complex number $x$, respectively.

\section{RELATED WORK}\label{Related_work}
Achieving the coexistence of ISAC poses a key challenge in mitigating mutual interference, for which several solutions have been proposed, such as opportunistic access \cite{Opportunistic_access}, null space projection \cite{NSP}, transceiver design \cite{2018_rihan}. However, coexistence merely involves spectrum sharing, with other resource types remaining separately utilized. To enable deeper integration, DFRC systems have been studied to perform both sensing and communication using the shared equipment \cite{ISAC_survey_big}.  In \cite{Joint_Transmit_Beamforming}, the authors considered transmit beamforming design, where the weighted sum of independent radar waveforms and communication symbols were optimized. Similarly, \cite{Multibeam} proposed the adaptive construction of multiple beams to simultaneously support communication and illuminate targets.  Additionally, a beamforming design for the DFRC systems at  millimeter wave band was proposed to approach the radar beampattern to a desired target and ensure the quality of service (QoS) for communication users \cite{Hybrid_Beamforming_mmwave_isac}.

Additionally, extensive researches has also been conducted to explore the benefits of RIS. In \cite{hexin_ris}, the authors incorporated RIS in the communication systems to minimize the transmit power, while in \cite{sum-rate}, the sum-rate maximization of a RIS-aided multi-user scenario was studied. Moreover, RIS can benefit various scenarios, such as physical layer security \cite{sj3}, near-field communication \cite{Yuanwei_nearfield}, and mobile edge computing \cite{MEC_cunhua}.  However, all of these studies utilized passive RIS (P-RIS) and performance improvement are less noticeable due to the multiplicative fading effect. To tackle this, A-RIS-related works have attracted great attentions. In \cite{active2}, the authors jointly optimized amplification and reflection coefficients of A-RIS to achieve receiving power maximization and RIS-related noise minimization. Besides, the hardware equipment experiments on  A-RIS were conducted in \cite{2021activeris}. In \cite{fully_sub}, a sub-connected architecture of A-RIS was proposed for power saving. In addition, authors demonstrated that A-RIS could improve the spatial diversity in radar detection \cite{active_radar}.

To fully take advantages of ISAC and RIS, numerous endeavors explored the potentials of RIS-aided ISAC systems. In \cite{zpc}, authors jointly designed the DFRC precoder and the RIS beamformer to maximize the radar signal-to-noise ratio (SNR) under a condition of the guaranteed communication user SNR. To reduce the computational burden of the RIS-aided ISAC beamforming design, a heuristic method \cite{Heuristic_RIS_ISAC} and an unsupervised learning approach \cite{UDL_JJ} were developed, respectively. Moreover, a more sophisticated RIS-aided ISAC systems was considered in \cite{LiuRang_RIS_ISAC}, where space-time adaptive processing for the systems design was developed. However, all of these studies focused on P-RIS, leaving multiplicative fading effect remained unresolved. To tackle this, some efforts were made to develop A-RIS-aided ISAC systems. For instance, the work in \cite{ARIS_ISAC} optimized the beamforming of the transmitter and the A-RIS to maximize the radar SNR under different communication metrics. Work \cite{JJ_EE_TVT} further studied an energy efficiency optimization problem in A-RIS-aided ISAC systems. In \cite{ARIS_secure}, employing A-RIS in ISAC systems to improve the physical layer security was also investigated. 

The aforementioned works may still suffer from high power consumption and hardware costs in the base station (BS) due to the deployment of multiple RF chains. Specifically, each data stream requires a dedicated RF chain, and activating all antennas results in significant energy consumption, as each RF chain is energy-hungry. One proper way to solve this problem is AS, as one can activate a subset of antennas with the favorable channel conditions \cite{AS_zhang1}, \cite{AS_zhang2}. Several works have studied what benefits the AS can bring to the ISAC systems \cite{AS_hongbin}, \cite{AS_liurang}, while several works discussed how the AS can aid RIS systems\cite{Xiashang}, \cite{RIS_AS}. Nevertheless, there are few works to discuss the advantages of employing AS to A-RIS aided-ISAC systems.

\section{System Model and Problem Formulation}\label{model_formulation}

\subsection{System Model}

\begin{figure}[!t]
	\centering
	\includegraphics[width=8.89cm, height=6.67cm]{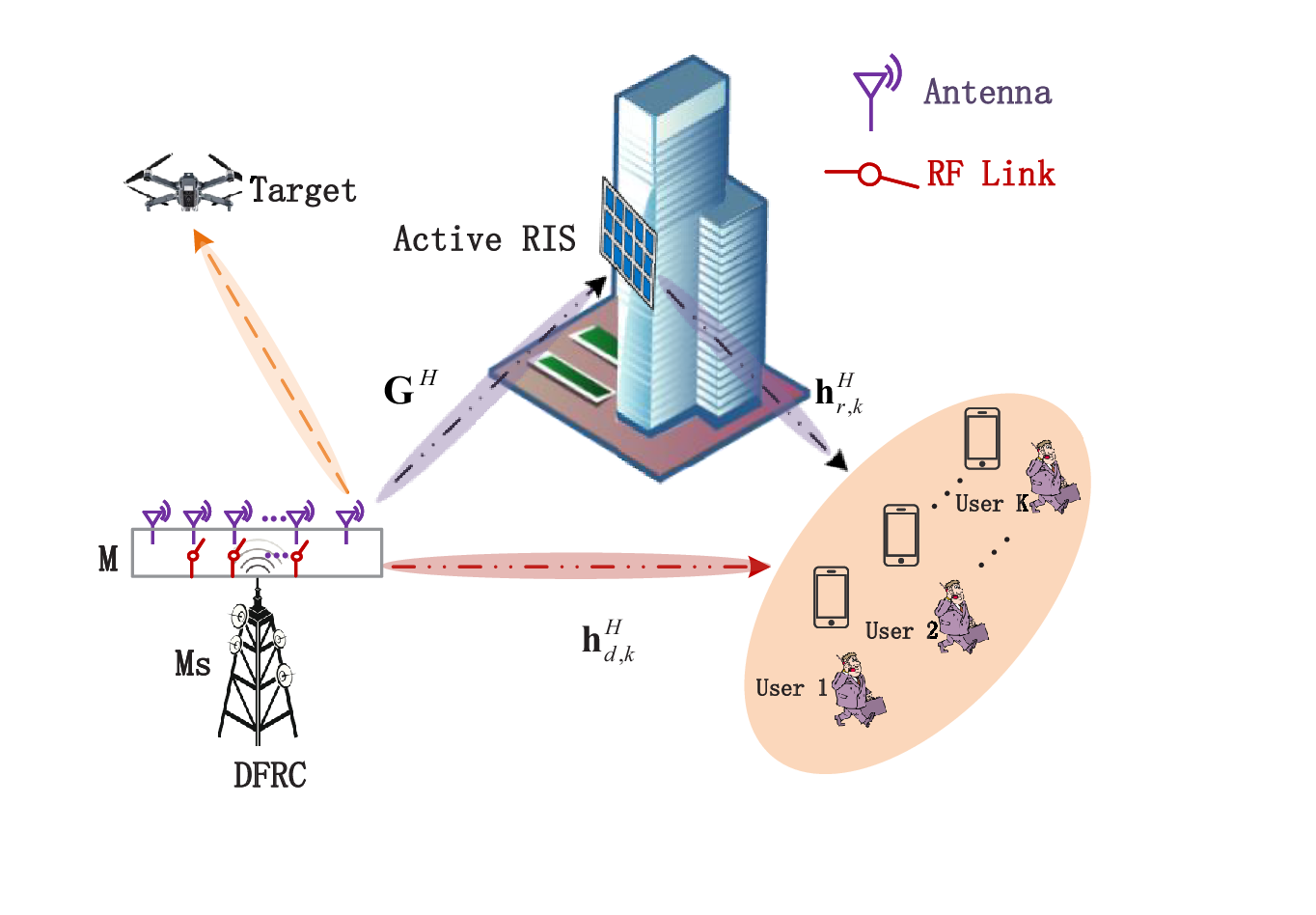}%
	\caption{An A-RIS-aided DFRC system model with the transmitter equipped with $M$ antenna and $M_s$ RF chains.}
	\label{model}
\end{figure}

An A-RIS-aided ISAC system incorporating AS in the BS is illustrated in Fig.\ref{model}. Specifically, $M$ antennas are equipped in the BS, with only $M_s$ RF chains available, indicating that $M_s$ out of $M$ antennas are selected for actual signal transmission. The system is designed to provide services for $K$ single antenna communication users and sense a target. Additionally, an A-RIS with $N$ elements is installed close to the communication users for communication enhancement, where all necessary information can be accessed by the BS and the A-RIS. It is also considered that the target is situated far away from the serving area of A-RIS, such that the A-RIS only plays the role of serving the users. 

In this system, we denote the original communication signal as $\mathbf{x}=[x_1,x_2,\cdots,x_K]^T \in \mathbb{C}^{K}$ with $x_k$ being the symbol for the user $k$. The symbols are mutually independent, i.e. $\mathbb{E}\left[x_k x_k^*\right]=1$ and $\mathbb{E}\left[x_k x_j^*\right]=0$ for $\forall k\neq j$. The signal $\mathbf{x}$ is fed to the selected $M_s$ antennas for propagation through the RF chains. As a result, the output equivalent baseband signal of the RF chains can be $\mathbf{S}=\mathbf{T}\mathbf{x}$, where $\mathbf{T} =[t_1, t_2, \cdots, t_K] \in \mathbb{C}^{M_s \times K}$ is a transmit beamformer. The signal $\mathbf{S}$ then propagates through the channels to serve the communication users and illuminate the target. Without loss of generality, we denote the channels from DFRC to the RIS associated with all $M$ antennas as $\widetilde{\mathbf{G}} \in \mathbb{C}^{N \times M}$, while $\mathbf{G} \in \mathbb{C}^{N \times M_s}$ selects $M_s$ columns from $\widetilde{\mathbf{G}}$ via an AS scheme $\mathcal{P}$ i.e., $\mathbf{G}=\mathcal{P}(\widetilde{\mathbf{G}})$. Similarly, the entire channels from DFRC to the users are formulated as $\widetilde{\mathbf{h}}_{d,k}^H \in \mathbb{C}^{1 \times M }$, while the selected channels are represented by $\mathbf{h}_{d,k}^H \in \mathbb{C}^{1 \times M_s }$, i.e., $\mathbf{h}_{d,k}^H=\mathcal{P}(\widetilde{\mathbf{h}}_{d,k}^H)$. After the signal is emitted to the A-RIS through the selected channels $\mathbf{G}$, it will be tuned in the A-RIS regarding its amplitude and the  phase, after which the tuned signal will propagate through channels $\mathbf{h}_{r,k}^H\in \mathbb{C}^{1 \times N }$ and is received by the users. As a consequence, the signal received by the user $k$ is given by

\begin{equation} \label{com_receive_with_selection}
	y_k = \left( \mathbf{h}_{d,k}^H + \mathbf{h}_{r,k}^H \mathbf{\Theta}^H \mathbf{A}^H \mathbf{G} \right) \mathbf{T} \mathbf{x} 
	+ \mathbf{h}_{r,k}^H \mathbf{\Theta}^H \mathbf{A}^H \mathbf{n}_{1} + n_k,
\end{equation}

\noindent where $\mathbf{A}=\text{diag} \left( \left[ a_1, a_2, \dots, a_N\right] \right)$ is the amplification factors matrix of the A-RIS, while the phase shift matrix of the A-RIS refers to $\mathbf{\Theta}=\text{diag}\left( \left[ e^{j \theta_1}, e^{j \theta_2}, \dots,e^{j \theta_N}\right] \right)$.  The additive white Gaussian noise (AWGN) added at user $k$ is represented by $n_k \sim \mathcal{CN} \left( 0, \sigma_k^2 \right)$, while $\mathbf{n}_{1} \sim \mathcal{CN} \left( \mathbf{0}_N, \sigma_{1}^2\mathbf{I}_N \right)$ is the dynamic noise introduced in the A-RIS.

To evaluate the communication performance of the system, we adopt the WSR of the system. This gives 
\begin{equation}
	R = \sum_{k=1}^{K} \mu_{k} \log_2 \left( 1+\gamma_k  \right),
	\label{WSR}
\end{equation}
in which $\gamma_k$ denotes the signal-to-interference-plus-noise ratio (SINR) of user $k$ and $\mu_k$ is the weight for user $k$. The SINR can be obtained based on the receiving model (\ref{com_receive_with_selection}) and  expressed as
\begin{equation}
	\gamma_k = \frac{ \left|\mathbf{h}_k^H \mathbf{t}_k \right|^2 }{\sum^K_{i=1, i \neq k} \left|\mathbf{h}_k^H \mathbf{t}_i \right|^2 + \sigma_{1}^2 \|\mathbf{h}_{r,k}^H \mathbf{\Theta}^H \mathbf{A}^H\|^2 + \sigma_k^2 }, 
\end{equation}
where $\mathbf{h}_k = \left(\mathbf{h}_{d,k} + \mathbf{G}^H \mathbf{A} \mathbf{\Theta} \mathbf{h}_{r,k}\right) \in \mathbb{C}^{M_s}$ corresponds to the equivalent end-to-end communication channel from DFRC to user $k$. 

In addition to the communications, the DFRC also operates to illuminate a target located at the direction of $\phi$ in a tracking mode. To evaluate this, we resort to the probing power $P_{r}$ in the angle of $\phi$, which can be given by
\begin{equation}
	P_{r} =\frac{\mathbf{a}^H\left(\phi \right) \mathbf{T} \mathbf{T}^H \mathbf{a}\left(\phi \right)}{M_s}.
\end{equation}
The term $\mathbf{T} \mathbf{T}^H$ is the covariance matrix of the transmit beamformer while the vector  $\mathbf{a}\left(\phi \right) \in \mathbb{C}^{M_s}$ is referred to as a selected steering vector towards a direction with angle $\phi$, expressed as
\begin{align}
	\mathbf{a}\left(\phi \right) = \left[e^{\jmath \frac{2\pi d}{\lambda} m_{s,1}}, e^{\jmath \frac{2\pi d}{\lambda} m_{s,2}} , \cdots, e^{\jmath \frac{2\pi d}{\lambda}m_{s,M_s}}\right],
\end{align}
where $m_{s,i} \in \lbrace 0, 1, \cdots, M-1 \rbrace$ indicates the index of the selected antenna linked to $i$-th RF chain ($i \in \lbrace 1, 2, \cdots, M_s \rbrace$). Additionally, one RF chain can only connect to one antenna at the same time, i.e., $m_{s,i}\neq m_{s,j}$ for any $i\neq j$.  The symbol $d$ denotes the spacing of the adjacent antenna, while $\lambda$ is the propagation signal wavelength. Without loss of generalization, the antenna spacing is set as half of $\lambda$. 

\subsection{Problem Formulation}
In this work, we aim to maximize system's WSR while minimizing the number of RF chains. Specifically, the AS scheme $\mathcal{P}$ is designed to select a superior subset of the antennas, while the transmit beamformer $\mathbf{T}$, the amplification matrix $\mathbf{A}$ as well as the reflection matrix $\mathbf{\Theta}$ are jointly designed given the selected antenna. Hence, the optimization problem is formulated as 
\begin{subequations} \label{original_formulation}
	\begin{align}  
		\max _{\mathcal{P},\mathbf{T},\mathbf{\Theta},\mathbf{A}} &~  R \tag{\ref{original_formulation}{a}} \\ 
		s.t.&~ \frac{\mathbf{a}^H\left(\phi,\mathcal{P} \right) \mathbf{T} \mathbf{T}^H \mathbf{a}\left(\phi,\mathcal{P} \right)}{M_s} \geq \eta P_s, \tag{\ref{original_formulation}{b}} \\
		& \text{diag}\left( \mathbf{T} \mathbf{T}^H \right) = \frac{P_s}{M_s} \mathbf{1}_{M_s\times 1},\tag{\ref{original_formulation}{c}}\\
		& \sum_{k=1}^K \left\|\mathbf{\Theta}^H \mathbf{A}^H \mathbf{G}(\mathcal{P}) \mathbf{t}_k\right\|^2 + \left\|\mathbf{A} \mathbf{\Theta} \right\|^2 \sigma_{1}^2 \leq P_a, \tag{\ref{original_formulation}{d}}
	\end{align}
\end{subequations}
where $P_s$ and $P_a$ represent the power budgets of the DFRC and the A-RIS, respectively. The constraint (\ref{original_formulation}{b}) guarantees that the sensing power towards the target direction $\phi$ must be at least $\eta P_s$, where $0 \leq \eta \leq 1$ is the ratio of the transmit power allocated for target sensing in DFRC. The constraint (\ref{original_formulation}{c}) is the constant-modulus constraint imposed by the sensing system to guarantee reduced radar signal distortion caused by the nonlinear amplification process of the transmitter. Meanwhile, the power budget at the DFRC is also guaranteed in (\ref{original_formulation}{c}). 
The constraint (\ref{original_formulation}{d}) indicates that the power consumed by the A-RIS does not exceed the allocated budget $P_a$. The optimization problem (\ref{original_formulation}) is non-convex and thus difficult to solve in its current form. Therefore, the joint optimization necessitates to achieve optimal performance while satisfying all the constraints.

\section{Joint BS and A-RIS beamforming design with AS}\label{algorithm}
In this section, a joint BS and A-RIS beamforming design as well as AS scheme are developed to solve the problem (\ref{original_formulation}). Specifically, we first put forward a cuckoo search-based AS scheme $\mathcal{P}$ to select a superior $M_s$ out of the $M$ antennas associated with the favorable channel conditions. Then, an alternating optimization-based approach is devised to alternately optimize the beamformer of the transmitter and the A-RIS, which utilizes the WMMSE framework and FP. Notably, the amplification matrix $\mathbf{A}$ and the reflecting matrix $\mathbf{\Theta}$ always retain a product form, so we define the matrix $\mathbf{\Psi} = \mathbf{A} \mathbf{\Theta} = \text{diag}\left(a_1 e^{j\theta_1},\dots,a_N e^{j\theta_N} \right) \in \mathbb{C}^{N \times N}$ for convenient expression and design in the following context.

\subsection{Cuckoo Search Policy  for AS}
Considering that the number of the RF chains is less than the antenna number, it is necessary to determine which antennas should be selected before getting to the optimization of other variables. Apparently, searching for a superior antenna subset is a combinatorial optimization problem, which is typically NP-hard \cite{NP_hard}. Exhaustive search can generally yield the optimal result for such problems, but it is computationally expensive, particularly when the total number of the antennas is massive. To reduce the complexity, a discrete cuckoo search-based is proposed for obtaining a superior antenna subset. 

Specifically, to obtain a superior antenna subset associated with the favorable channel conditions, the corresponding cuckoo search-based scheme is presented as follows. First, $L$ antenna subsets ($L$ candidate solutions) are randomly initialized, where $l$-th subset $C_{l}$ contains $M_s$ antennas index given as $C_{l}=\lbrace m_{s,1},  \cdots,m_{s,i}, \cdots, m_{s,M_s}\rbrace$. For each $C_l$, Lévy flight is performed to generate a new candidate solution. The updating rule can be given by
\begin{equation}
	C_{l}^{(t+1)} =C_{l}^{(t)} + \alpha \odot \varepsilon,
	\label{cuckoo_update}
\end{equation}
where $C_{l}^{(t)}$ is the $l$-th solution in $t$-th round search,  $\odot$ represents Hadamard product. To control the search step-size,  $\alpha$ is a step-size scaling factor. The symbol $\varepsilon$ denotes a random step-size following the Lévy distribution, and it can be expressed as
\begin{equation}
	\varepsilon = \frac{u}{|s|^{\frac{1}{\delta}}},
	\label{eq:epsilon}
\end{equation}
where $u$ is a random Gaussian variable following $\mathcal{CN}(0, \sigma_u^2)$. The standard deviation $\sigma_u$  is given by $\sigma_u = \left\{ \frac{\Gamma(1+\delta) \sin\left(\frac{\pi \delta}{2}\right)}{\Gamma\left(\frac{1+\delta}{2}\right) \delta 2^{\frac{\delta-1}{2}}} \right\}^{\frac{1}{\delta}}$, where
$\Gamma$ is the gamma function while $\delta \in (0, 2)$ and $s \sim \mathcal{CN}(0, 1)$. Subsequently, a local random search is performed to discard some of the solutions with a certain probability, while the similarity among different solutions are used to randomly generate a new solution. This is expressed as
\begin{equation}
	C_l^{t+1} = C_l^t + \gamma \odot h(p - \zeta) \odot (C_j^t - C_k^t),
	\label{eq:Tupdate}
\end{equation}
where $\gamma $ and $\zeta$  are random numbers following a uniform distribution, while $p$ is the probability of discarding part of the solutions.  $C_j^t$ and $C_k^t$ are the other solutions and $h(p - \zeta)$ denotes a step function, given as
\begin{align}
	h(p-\zeta) = \begin{cases}
		1, & p - \zeta >0,\\
		0, & p - \zeta \leq 0.
	\end{cases}
\end{align}

To evaluate the obtained solution, a fitness function should be developed. Specifically, we indicate the quality of the antenna subsets by selecting the WSR of (\ref{WSR}) as the fitness function.

By leveraging Lévy flight and the local random search scheme, we update the antenna subsets and evaluate the corresponding channel conditions. It is important to mention that, as the derived solutions may consist of continuous real numbers, floor operation are applied to convert them into integer combinations. Furthermore, the updated antenna subsets might contain duplicate antenna indices, which must be addressed by randomly generating distinct indices to replace duplicates. Additionally, if the obtained antenna index exceeds  $M_s$,  it will be upper-bounded by $M_s$, where similar lower-bounded operation also applies for the case that is less than 1. Moreover, considering that the norm of the subchannels is also affected by different $\boldsymbol{\Psi}$, it should be determined before selecting the antennas. To obtain the $\boldsymbol{\Psi}$ at this stage, please refer to the detailed algorithm refer to \ref{sub_3.3}.
The detailed algorithm procedure is shown in {\textbf{Algorithm 1}}.

\begin{algorithm}[!t] 
	\caption{Proposed Cuckoo Search Algorithm for AS} 
	\begin{algorithmic}[1] 
		\REQUIRE 
		Channels $\widetilde{\mathbf{G}}$, ${\mathbf{h}_{r,k}}$ and $\widetilde{\mathbf{h}}_{d,k}$;
		
		\STATE Randomly initialize $L$, $\boldsymbol{\Psi}$, antenna combinations $C_1, \cdots, C_L$;
		\WHILE{no convergence}
		\STATE Calculate the fitness function value of the corresponding antenna subset $C_l$;
		\STATE Update the antenna subsets with Lévy flight in (\ref{cuckoo_update});
		\IF{the updated subset outperforms the original one}
		\STATE Replace the original with new antenna subset;
		\STATE Check whether the updated solution is legitimate and amend the illegitimate one;
		\ENDIF       
		
		\STATE Update the antenna subsets with local random search in (\ref{eq:Tupdate});
		\IF{the updated subset outperforms the original one}
		\STATE Replace the original with new antenna subset;
		\STATE Check whether the updated solution is legitimate and amend the illegitimate one;
		\ENDIF
		
		\ENDWHILE
		\RETURN The selected antenna index $C^\star=\lbrace m_{s,1},  \cdots,m_{s,i}, \cdots, m_{s,M_s}\rbrace$ and the corresponding subchannels $\mathbf{G}$, and $\mathbf{h}_{d,k}$.
	\end{algorithmic}
\end{algorithm}\label{cukoo_algrithm}

\subsection{WMMSE for Transmit Beamformer Design}
After determining a superior antenna subset, the transmit beamformer $\mathbf{T}$ and the A-RIS beamformer $\mathbf{\Psi}$ are optimized alternating based on the selected antennas. In this subsection, the transmit beamformer $\mathbf{T}$ is optimized with keeping  $\mathbf{\Psi}$ fixed following the WMMSE framework \cite{WMMSE}. 

The WSR maximization problem can be converted into an equivalent WMMSE optimization problem to obtain a tractable form of the problem. To achieve this, a linear estimator $s_k$ is first introduced at user $k$ to estimate the communication symbol from the received signal of (\ref{com_receive_with_selection}), where the estimated symbol can be written as
\begin{align} \label{estimation}
	\hat{x}_k &= s_k y_k  = s_k \left(\mathbf{h}_{k}^H \sum_{i=1}^K \mathbf{t}_i x_i + \mathbf{h}_{r,k}^H \mathbf{\Psi}^H \mathbf{n}_{1} + n_k \right). 
\end{align}
Given the estimated symbol $\hat{x}_k$ in (\ref{estimation}), the  mean square error (MSE) between $\hat{x}_k$  and the true symbol $x_k$ can be derived as
\begin{subequations}\label{sym_err}
	\begin{align} 
		e_k &= \mathbb{E}\left[\|\hat{x}_k-x_k\|^2\right] \\
		&\overset{(a)}{=}|s_k|^2 \bigg(
		\sum_{i=1}^K |\mathbf{h}_{k}^H  \mathbf{t}_i|^2 
		+ \|\mathbf{h}_{r,k}^H \mathbf{\Psi}^H\|^2 \mathbf{\sigma}_{1}^2 
		+\sigma^2_k \bigg) \nonumber \\
		&\quad - 2 \mathfrak{R} \left\lbrace s_k \mathbf{h}_{k}^H  \mathbf{t}_k \right\rbrace 
		+ 1,
	\end{align}  
\end{subequations}

\noindent where the operation (a) leverages the mutual independence of the communication symbols. It can be observed that the MSE of (\ref{sym_err}) is in a quadratic form with respect to $s_k$, thus the optimal estimator $s_k$ for user $k$ is derived by solving $\frac{\partial e_k}{\partial s_k}=0$ to minimize the symbol estimation error. This yields
\begin{align} \label{sMMSE}
	s_k^{MMSE} &= \text{arg}\min_{s_k} e_k \nonumber \\
	&= \frac{\mathbf{t}_k^H \mathbf{h}_k}{
		\sum_{i=1}^K |\mathbf{h}_{k}^H  \mathbf{t}_i|^2 
		+ \|\mathbf{h}_{r,k}^H \mathbf{\Psi}^H\|^2 \mathbf{\sigma}_{1}^2 
		+\sigma^2_k}.
\end{align}

\noindent Accordingly, given the optimal $s_k^{MMSE}$, the corresponding MSE $e_k^{MMSE}$ can then be derived as
\begin{align}
	e_k^{MMSE} &= \mathbb{E}\left[\left\|s_k^{MMSE} y_k - x_k \right\|^2\right] \nonumber \\
	&= 1 - \frac{|\mathbf{h}_k^H \mathbf{t}_k|^2}{
		\sum_{i=1}^K |\mathbf{h}_{k}^H \mathbf{t}_i|^2 
		+ \|\mathbf{h}_{r,k}^H \mathbf{\Psi}^H\|^2 \mathbf{\sigma}_{1}^2 
		+ \sigma^2_k}.
\end{align}
To equivalently transform the original non-convex optimization problem (\ref{original_formulation}) with respect to $\mathbf{T}$ into WMMSE form, we employ the following theorem.

\begin{theorem}
	The WSR problem of transmitter design is equivalent to a weighted MMSE minimization problem when the weighted MMSE coefficients are selected as
	\begin{align} \label{omega}
		\omega_k = \mu_k \left(e_k^{MMSE}\right)^{-1}.
	\end{align}
	With these weighted MMSE coefficients $\omega_k$, the KKT-conditions of the equivalent problem and the original problem can be satisfied simultaneously.
	\label{thm-1-3}
\end{theorem}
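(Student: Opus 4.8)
The plan is to follow the classical WMMSE argument: introduce an auxiliary weight $\omega_k>0$ for each user, build a surrogate objective defined jointly over $\mathbf{T}$, the receive scalars $\{s_k\}$, and the weights $\{\omega_k\}$, and show that maximizing out the auxiliary variables recovers exactly the WSR. Concretely, define for each $k$ the function $g_k(\mathbf{T},s_k,\omega_k) = \ln\omega_k - \omega_k e_k(\mathbf{T},s_k) + 1$, where $e_k$ is the MSE in (\ref{sym_err}). The first observation is that $e_k$ is convex quadratic in $s_k$, so $\min_{s_k} e_k = e_k^{MMSE}$ is attained at (\ref{sMMSE}); the second is that $\ln\omega_k - \omega_k e_k^{MMSE} + 1$ is strictly concave in $\omega_k$ with maximizer $\omega_k = 1/e_k^{MMSE}$ and maximal value $-\ln e_k^{MMSE}$. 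Hence $\max_{\omega_k,s_k} g_k = -\ln e_k^{MMSE}$.

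Next I would verify the key algebraic identity $e_k^{MMSE} = (1+\gamma_k)^{-1}$ by substituting the stated expressions for $e_k^{MMSE}$ and $\gamma_k$ into each other: the denominator of the fractional term in $e_k^{MMSE}$ equals $|\mathbf{h}_k^H\mathbf{t}_k|^2$ plus the denominator of $\gamma_k$, so $e_k^{MMSE} = 1-\frac{\gamma_k}{1+\gamma_k} = \frac{1}{1+\gamma_k}$. Consequently $-\ln e_k^{MMSE} = \ln(1+\gamma_k)$, and therefore $R = \frac{1}{\ln 2}\sum_k \mu_k \max_{\omega_k,s_k} g_k(\mathbf{T},s_k,\omega_k)$. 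Maximizing $R$ over $\mathbf{T}$ under the constraints (\ref{original_formulation}{b})--(\ref{original_formulation}{d}) is thus equivalent to jointly maximizing $\sum_k \mu_k g_k$ over $(\mathbf{T},\{s_k\},\{\omega_k\})$ under the same constraints, i.e. to the weighted MSE minimization $\min \sum_k (\mu_k\omega_k e_k - \mu_k\ln\omega_k)$, whose optimal weights, after the per-block minimization over $s_k$, are precisely $\omega_k = \mu_k (e_k^{MMSE})^{-1}$, which is (\ref{omega}).

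Finally, for the KKT equivalence I would argue via the envelope theorem. Because $\{s_k\}$ and $\{\omega_k\}$ enter no constraint, any KKT point of the joint problem has $s_k = s_k^{MMSE}$, $\omega_k = \mu_k/e_k^{MMSE}$, and a stationarity condition in $\mathbf{t}_k$ with multipliers attached only to (\ref{original_formulation}{b})--(\ref{original_formulation}{d}); conversely, given a KKT point $\mathbf{T}^\star$ of the WSR problem, appending these closed-form auxiliary values yields a KKT point of the joint problem with the same constraint multipliers. The crux is that, at the optimal auxiliary values, $\nabla_{\mathbf{t}_k}\!\big(\frac{1}{\ln 2}\sum_j \mu_j g_j\big) = \nabla_{\mathbf{t}_k}\!\big(\mu_k\log_2(1+\gamma_k)\big)$, since the terms from differentiating through $s_k$ and $\omega_k$ vanish by their first-order optimality. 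Hence the two Lagrangians share the same stationary points and the same complementary-slackness structure. The main obstacle I anticipate is not the scalar calculus but bookkeeping the constraints: making explicit that the equivalence is stated with the feasible set on $\mathbf{T}$ held fixed and that the unconstrained auxiliary blocks allow the constraint multipliers to pass through unchanged; once that is set up, the remaining steps are the routine stationarity computations for $s_k$ and $\omega_k$ already begun in (\ref{sMMSE}).
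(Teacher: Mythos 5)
Your proposal is correct and follows essentially the same route as the paper, which simply defers to the standard WMMSE equivalence derivation of the cited reference \cite{WMMSE}: the surrogate $\ln\omega_k-\omega_k e_k+1$, the identity $e_k^{MMSE}=(1+\gamma_k)^{-1}$, and the envelope/first-order argument for the auxiliary blocks are exactly the ingredients of that classical proof. The only cosmetic point is the constant $1/\ln 2$ from the base-2 logarithm, which rescales the weights and multipliers but does not affect the KKT equivalence.
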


\begin{proof}
	Please refer the derivation to \cite{WMMSE}.
\end{proof}

Consequently, the optimization problem (\ref{original_formulation}) can be equivalently transformed into a weighted MSE minimization problem as

\begin{subequations}\label{WMMSE_formulation}
	\begin{align}  
		\min _{\mathbf{T}} &~  \sum_{k=1}^K \omega_k e_k \tag{\ref{WMMSE_formulation}{a}} \\ 
		s.t.&~ \mathbf{a}^H \mathbf{T} \mathbf{T}^H \mathbf{a} \geq \eta M_s P_s, \tag{\ref{WMMSE_formulation}{b}} \\
		& \text{diag}\left(\mathbf{T} \mathbf{T}^H\right) = \frac{P_s}{M_s} \mathbf{1}^{M_s}, \tag{\ref{WMMSE_formulation}{c}}\\
		& \sum_{k=1}^K \left\|\mathbf{\Psi}^H \mathbf{G} \mathbf{t}_k\right\|^2 + \left\|\mathbf{\Psi} \right\|^2 \sigma_{1}^2 \leq P_a. \tag{\ref{WMMSE_formulation}{d}}
	\end{align}
\end{subequations}
Carefully inspecting the problem, it can be found that the objective function (\ref{WMMSE_formulation}{a}) is in a combination of quadratic and linear forms with respective to $\mathbf{t}_k$ while the constraints include quadratic equality constraint as well as quadratic inequality constraint, rendering it non-convex. 

Focusing on the non-convex constraint (\ref{WMMSE_formulation}{b}), its left term can be equivalently rewritten as

\begin{equation}
	\mathbf{a}^H \mathbf{T} \mathbf{T}^H \mathbf{a} = M_s P_s - \sum_{k=1}^K \mathbf{t}_k^H \left(M_s \mathbf{I} - \mathbf{a}\mathbf{a}^H\right) \mathbf{t}_k.
\end{equation}
The constraint (\ref{WMMSE_formulation}{b}) can then be re-expressed as
\begin{equation}
	\sum_{k=1}^K \mathbf{t}_k^H \mathbf{\Bar{Z}} \mathbf{t}_k \leq \left(1-\eta \right) M_s P_s.
	\label{eq15}
\end{equation}
where we denote $\mathbf{\Bar{Z}} = M \mathbf{I} - \mathbf{a}\mathbf{a}^H$. $\mathbf{\Bar{Z}}$ is actually a semi-definite matrix since the $\mathbf{a}\mathbf{a}^H$ is a rank one matrix with the unique non-zero eigenvalue being $M_s$. This makes the constraint (\ref{eq15}) convex. As for the constraint (\ref{WMMSE_formulation}{d}), it can be readily transformed into a standard quadratic constraint by expanding the norm, such that
\begin{align} \label{Y_bar_transform}
	\sum_{k=1}^K \mathbf{t}_k^H \mathbf{\Bar{Y}} \mathbf{t}_k \leq \Tilde{P}_a,
\end{align}
where $\mathbf{\Bar{Y}} = \mathbf{G}^{H} \mathbf{\Psi} \mathbf{\Psi}^{H} \mathbf{G}$, $\Tilde{P}_{a} = P_{a} - \left\|\mathbf{\Psi}\right\|^2 \sigma_{1}^2$. The matrix $\mathbf{\Bar{Y}}$ is obviously semi-definite rendering the constraint (\ref{Y_bar_transform}) convex. To facilitate consistent forms, the constraint (\ref{WMMSE_formulation}{c}) can  be rewritten as
\begin{equation} \label{constraint2}
	\text{diag}\left(\mathbf{T} \mathbf{T}^H\right) = \text{diag}\left(\sum_{k=1}^K \mathbf{t}_k \mathbf{t}_k^H\right) = \frac{P_s}{M_s} \mathbf{1}^{M_s}.
\end{equation}
However, the quadratic equality constraint is still challenging to solve, which requires extra processing later. Observing the objective function (\ref{WMMSE_formulation}{a}), one can notice that it is  actually in a inhomogeneous quadratic form, where the quadratic form and the linear term exists simultaneously. To homogenize and simplify the objective function, we expand and re-express the MSE $e_k$ of (\ref{sym_err}) as follows:
\begin{align}
	e_k &= |s_k|^2 \bigg( 
	\sum_{i=1}^K |\mathbf{h}_{k}^H \mathbf{t}_i|^2 
	+ \|\mathbf{h}_{r,k}^H \mathbf{\Psi}^H\|^2 \mathbf{\sigma}_{1}^2 
	+ \sigma^2_k \bigg) \nonumber \\
	&\quad - 2 \mathfrak{R} \left\lbrace g_k \mathbf{h}_{k}^H \mathbf{t}_k \right\rbrace + 1 \nonumber \\
	&= |s_k|^2 \sum_{i=1}^K \mathbf{t}_i^H \mathbf{h}_{k} \mathbf{h}_{k}^H \mathbf{t}_j 
	- s_k \mathbf{h}_{k}^H \mathbf{t}_k 
	- s_k^* \mathbf{t}_k^H \mathbf{h}_{k} \nonumber \\
	&\quad + |s_k|^2 \sigma^2_k 
	+ |s_k|^2 \|\mathbf{h}_{r,k}^H \mathbf{\Psi}^H\|^2 \mathbf{\sigma}_{1}^2 
	+ 1.
\end{align}

\noindent After dropping the unrelated terms, the objective function can be further simplified as
\begin{align}
	\hat{e}_k &= |s_k|^2 \sum_{i=1}^K \mathbf{t}_i^H \mathbf{h}_{k} \mathbf{h}_{k}^H \mathbf{t}_j 
	- s_k \mathbf{h}_{k}^H \mathbf{t}_k 
	- s_k^* \mathbf{t}_k^H \mathbf{h}_{k} \nonumber \\
	&= \sum_{i=1, i \neq k}^K \mathbf{\Tilde{t}}_i^H \mathbf{C}_{k,1} \mathbf{\Tilde{t}}_i 
	+ \mathbf{\Tilde{t}}_k^H \mathbf{C}_{k,2} \mathbf{\Tilde{t}}_k,
\end{align}

\noindent where $\hat{\mathbf{t}}_k = z_k \mathbf{t}_k$, $\mathbf{\Tilde{t}}_k = \begin{bmatrix}
	\hat{\mathbf{t}}_k & z_k
\end{bmatrix}^T $, $\mathbf{C}_{k,1} = \begin{bmatrix}
	|s_k|^2 \mathbf{h}_k \mathbf{h}_k^H & \mathbf{0} \\
	\mathbf{0}^T & 0
\end{bmatrix}$, $\mathbf{C}_{k,2} = \begin{bmatrix}
	|s_k|^2 \mathbf{h}_k \mathbf{h}_k^H & - s_k^* \mathbf{h}_k \\
	- s_k \mathbf{h}_k^H & 0
\end{bmatrix}$ and $|z_k|^2=1$. Since the dimension of the optimizing variable increases, the constraints (\ref{eq15}), (\ref{Y_bar_transform}), (\ref{constraint2}) should be modified accordingly and given as 
\begin{subequations} \label{modified_constraints}
	\begin{align}
		&\sum_{k=1}^K \mathbf{\Tilde{t}}_k^H \mathbf{\Tilde{Z}} \mathbf{\Tilde{t}}_k \leq \left(1-\eta \right) M P_s,  \\
		& \sum_{k=1}^K \Tilde{\mathbf{t}}_k^H \Tilde{\mathbf{Y}} \Tilde{\mathbf{t}}_k \leq \Tilde{P}_a,\\
		& \text{diag}\left(\sum_{k=1}^K \Tilde{\mathbf{t}}_k \Tilde{\mathbf{t}}_k^H\right) =  \begin{bmatrix}
			\frac{P_s}{M} \mathbf{1}^{M \times 1} \\
			K
		\end{bmatrix}, 
	\end{align} 
\end{subequations}    
where $\mathbf{\Tilde{Z}} =\begin{bmatrix}
	\mathbf{\Bar{Z}} & \mathbf{0} \\
	\mathbf{0}^T & 0
\end{bmatrix}$, 
$\Tilde{\mathbf{Y}}=\begin{bmatrix}
	\mathbf{\Bar{Y}} & \mathbf{0} \\
	\mathbf{0}^T & 0
\end{bmatrix}$ are introduced.

The problem remains non-convex owing to the quadratic equality constraint (\ref{modified_constraints}{c}). To deal with it, we employ the semidefinite relaxation (SDR) approach by  assigning $\mathbf{W}_k$ to be equal to $\mathbf{\Tilde{t}}_k \mathbf{\Tilde{t}}_k^H$ and leveraging the circular property of trace. As a consequence, the problem becomes
\begin{subequations} \label{WMMSE4} 
	\begin{align}
		\min _{\mathbf{W}_1,\dots,\mathbf{W}_K} &~   \sum_{k=1}^K \!\left(\!\omega_k \!\!\!\sum_{i=1,i\neq k}^K \!\!\! \text{tr}\left(\mathbf{C}_{k,1}  \mathbf{W}_i \right) + \omega_k \text{tr}\left(\mathbf{C}_{k,2}  \mathbf{W}_k \right) \right) \tag{\ref{WMMSE4}{a}} \\ 
		s.t.&~ \sum_{k=1}^K \text{tr}\left( \mathbf{\Tilde{Z}} \mathbf{W}_k \right) \leq \left(1-\eta \right) M P_s, \tag{\ref{WMMSE4}{b}}\\
		& \text{diag}\left(\sum_{k=1}^K \mathbf{W}_k \right) =  \begin{bmatrix}
			\frac{P_s}{M} \mathbf{1}^{M} \\
			K
		\end{bmatrix}, \tag{\ref{WMMSE4}{c}}\\
		&  \sum_{k=1}^K\text{tr}\left(\mathbf{\Tilde{Y}} \mathbf{W}_k \right) \leq \Tilde{P}_a, \tag{\ref{WMMSE4}{d}} \\ 
		& \left[\mathbf{W}_k\right]_{M+1,M+1} = 1,  \tag{\ref{WMMSE4}{e}} \\
		& \mathbf{W}_k \succeq 0, \mathbf{W}_k = \mathbf{W}_k^H,  \text{rank}\left(\mathbf{W}_k\right)=1.  \tag{\ref{WMMSE4}{f}}
	\end{align}
\end{subequations}
The rank-one constraint still makes the problem (\ref{WMMSE4}) challenging to solve. One approach to deal with this is to relax the rank-one constraint so that the problem becomes a standard convex problem. However, the obtained $\mathbf{W}_k$ may not satisfy the rank one constraint. In such cases, rank one approximation is required to obtain $\mathbf{t}_k$ by using Gaussian randomization or maximum eigenvalue approximation approach \cite{Rank_one}.

\subsection{Fractional Programming for RIS Beamforming}{\label{sub_3.3}}
This subsection aims to optimize the A-RIS beamforming matrix $\mathbf{\Psi}$ when the transmit beamformer $\mathbf{W}$ is given. With other variables being fixed, the optimization problem (\ref{original_formulation}) can be simplified as 
\begin{subequations}\label{FP}
	\begin{align}  
		\max _{\mathbf{\Psi}} &~  \sum_{k=1}^K \mu_k \log_2 \left( 1+\gamma_k  \right) \tag{\ref{FP}{a}} \\ 
		s.t.&~ \sum_{k=1}^K \left\|\mathbf{\Psi}^H \mathbf{G} \mathbf{t}_k\right\|^2 + \left\|\mathbf{\Psi} \right\|^2 \sigma_{1}^2 \leq P_a. \tag{\ref{FP}{b}}
	\end{align}
\end{subequations}

The problem (\ref{FP}) is a non-convex optimization problem given the fact that the objective function (\ref{FP}{a}) is in a  logarithmic and fractional form with respect to  $\mathbf{\Psi}$. First of all, a Lagrangian dual transform approach \cite{WJZ1}\cite{WJZ2} can be utilized to separate the optimizing variable from the logarithm function. Specifically, we introduce an auxiliary variable $\pmb{\alpha}$ and equivalently transform the problem (\ref{FP}) as
\begin{subequations}\label{FP2}
\begin{align}  
	\max _{\mathbf{\Psi},\pmb{\alpha}} &~  f\left(\mathbf{\Psi},\pmb{\alpha}\right)
	=\sum_{k=1}^K \mu_k \log_2 \left( 1+\alpha_k  \right) \notag \\
	&- \sum_{k=1}^K \mu_k \alpha_k 
	+ \sum_{k=1}^K \frac{\mu_k \left(1+\alpha_k\right) \gamma_k}{1+\gamma_k} \tag{\ref{FP2}{a}} \\ 
	s.t.&~ \sum_{k=1}^K \left\|\mathbf{\Psi}^H \mathbf{G}\mathbf{t}_k\right\|^2 
	+ \left\|\mathbf{\Psi} \right\|^2 \sigma_{1}^2 \leq P_a. \tag{\ref{FP2}{b}}
\end{align}

\end{subequations}
When given $\mathbf{\Psi}$, the problem is unconstrained convex optimization problem with respect to $\pmb{\alpha}$, and thus the optimal $\pmb{\alpha}$ can be obtained by solving the equation $\frac{\partial f\left(\mathbf{\Psi},\pmb{\alpha} \right) }{\partial \alpha_k}=0$. This yields
\begin{equation} \label{update_alpha}
	\alpha_k^\star = \gamma_k.
\end{equation}

Given the optimal $\pmb{\alpha}$, the problem (\ref{FP2}) turns into a FP problem by ignoring the unrelevant terms, given as 
\begin{subequations}\label{FP3}
	\begin{align}  
		\max _{\mathbf{\Psi}} &~  f\left(\mathbf{\Psi}\right) = \sum_{k=1}^K \frac{\mu_k \left(1+\alpha_k\right) \gamma_k}{1+\gamma_k} \tag{\ref{FP3}{a}} \\
		s.t.&~ \sum_{k=1}^K \left\|\mathbf{\Psi}^H \mathbf{G} \mathbf{t}_k\right\|^2 + \left\|\mathbf{\Psi} \right\|^2 \sigma_{1}^2 \leq P_a. \tag{\ref{FP3}{b}}
	\end{align}
\end{subequations}

Apparently, the objective function is in a fraction form, which hinders the problem solving. Fortunately, an equivalent quadratic transformation \cite{FP} can be applied to  the objective function (\ref{FP3}{a}). It has been demonstrated in \cite{FP} that an equivalent form of the original fractional objective function can be given as
\begin{align}\label{FP_transformed_objective_function}
	g \left(\mathbf{\Psi},\pmb{\varepsilon} \right) &= 
	\sum_{k=1}^K 2 \sqrt{\mu_k \left(1+\alpha_k\right)} 
	\mathfrak{R} \left \lbrace \varepsilon_k^{*} \mathbf{h}_k^H \mathbf{t}_k \right \rbrace \nonumber \\
	&\quad - |\varepsilon_k |^2 \left\{ \sum_{i=1}^K \left|\mathbf{h}_k^H \mathbf{t}_i \right|^2 + \sigma_{1}^2 \|\mathbf{h}_{r,k}^H \mathbf{\Psi}^H\|^2 + \sigma_k^2 \right\}.
\end{align}

\noindent where $\pmb{\varepsilon}$ is an introduced auxiliary variable. Similarly, regarding $\pmb{\varepsilon}$, 
the optimal solution $\pmb{\varepsilon}^\star$ can be derived by solving $\frac{\partial g\left(\mathbf{\Psi},\pmb{\varepsilon} \right) }{\partial \varepsilon_k}=0$. As a result, we can obtain
\begin{equation} \label{update_epsilon}
	\varepsilon_k^\star = \frac{\sqrt{\mu_k \left(1+\alpha_k\right)} \mathbf{h}_k^H \mathbf{t}_k}{\sum_{i=1}^K |\mathbf{h}_k^H \mathbf{t}_i|^2 +\sigma_{1}^2 \|\mathbf{h}_{r,k}^H \mathbf{\Psi}^H \|^2 +\sigma_k^2}.
\end{equation}
To further observe the problem, we explicitly express the variable $\mathbf{\Psi}$ by substituting $\mathbf{h}_k^H \!\!= \!\!\left(\!\mathbf{h}_{d,k}^H\!+\!\mathbf{h}_{r,k}^H \mathbf{\Psi}^H \mathbf{G}\!\right)$ into the objective function (\ref{FP_transformed_objective_function}). By expanding the objective function and removing any terms that do not depend on $\mathbf{\Psi}$, one can rewrite the objective function as two linear terms and two quadratic terms:
\begin{align} \label{FP_transformed_objective_function2}
	h \left(\mathbf{\Psi}\right) &= 
	\sum_{k=1}^K 2 \mathfrak{R} \bigg\lbrace 
	\varepsilon_k^{*} \sqrt{\mu_k \left(1+\alpha_k\right)} 
	\mathbf{h}_{r,k}^H \mathbf{\Psi}^H \mathbf{G} \mathbf{t}_k 
	\bigg\rbrace \notag \\
	&\quad - \sum_{k=1}^K 2 \mathfrak{R} \bigg\lbrace 
	|\varepsilon_k |^2 \mathbf{h}_{r,k}^H \mathbf{\Psi}^H \mathbf{G} 
	\sum_{i=1}^K \left( \mathbf{t}_i \mathbf{t}_i^H \right) \mathbf{h}_{d,k} 
	\bigg\rbrace \notag \\
	&\quad - \sum_{k=1}^K |\varepsilon_k |^2 
	\mathbf{h}_{r,k}^H \mathbf{\Psi}^H \mathbf{G} 
	\sum_{i=1}^K \left( \mathbf{t}_i \mathbf{t}_i^H \right) \mathbf{G}^H 
	\mathbf{\Psi} \mathbf{h}_{r,k} \notag \\
	&\quad - \sum_{k=1}^K |\varepsilon_k |^2 \sigma_{1}^2 
	\mathbf{h}_{r,k}^H \mathbf{\Psi}^H \mathbf{\Psi} \mathbf{h}_{r,k}.
\end{align}

\noindent The presence of $\mathbf{\Psi}$ is contained within other matrices as shown in (\ref{FP_transformed_objective_function2}). To make the objective function more manageable, we introduce the variable $\pmb{\psi} = \text{diag}(\mathbf{\Psi})$, and perform matrix transformations based on the following theorem, which allows us to rewrite the objective function in a more tractable form.

\begin{theorem}
	Denote $\mathbf{\Theta} \in \mathbb{C}^{N \times N}$ as a diagonal matrix, $\mathbf{a}, \mathbf{b} \in \mathbb{C}^{N}$ as any vector in an $N \times 1$ dimension space, $\mathbf{D} \in \mathbb{C}^{N \times N}$ as any matrix in an  $N \times N$  dimension space. The following relationships always hold:
	\begin{subequations}
		\begin{align}
			\mathbf{a}^H_{ } \mathbf{\Theta}^H_{ } \mathbf{b} & = \pmb{\theta}^H_{ } \mathbf{A}^H \mathbf{b},\\
			\mathbf{a}^H_{ } \mathbf{\Theta}^H_{ } \mathbf{D}_{ } \mathbf{\Theta}_{ } \mathbf{a} & = \pmb{\theta}^H_{ } \mathbf{A}_{ }^H \mathbf{D}_{ } \mathbf{A}_{ } \pmb{\theta},
		\end{align}
	\end{subequations}
	where $\pmb{\theta} \in \mathbb{C}^{N }$ is a vector whose elements are the elements of $\mathbf{\Theta}$ in diagonal, i.e. $\pmb{\theta}= \text{diag}\left(\mathbf{\Theta}\right)$. The matrix $\mathbf{A}  \in \mathbb{C}^{N \times N}$ is a diagonal matrix whose diagonal elements are the elements in $\mathbf{a}$, i.e. $\mathbf{A}= \text{diag}\left(\mathbf{a}\right)$. 
	\label{thm-2}
\end{theorem}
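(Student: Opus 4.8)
The plan is to prove the two identities in Theorem~\ref{thm-2} by direct computation, exploiting the fact that a diagonal matrix acting on a vector is the same as a diagonal matrix built from that vector acting on the original diagonal. First I would establish the elementary bookkeeping identity that underlies both parts: for a diagonal matrix $\mathbf{\Theta}=\text{diag}(\pmb{\theta})$ and any vector $\mathbf{a}\in\mathbb{C}^N$, we have $\mathbf{\Theta}\mathbf{a}=\mathbf{A}\pmb{\theta}$, where $\mathbf{A}=\text{diag}(\mathbf{a})$. This is immediate entrywise, since $[\mathbf{\Theta}\mathbf{a}]_i=\theta_i a_i=[\mathbf{A}\pmb{\theta}]_i$; taking conjugate transposes also gives $\mathbf{a}^H\mathbf{\Theta}^H=\pmb{\theta}^H\mathbf{A}^H$.

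For part (a), I would simply write $\mathbf{a}^H\mathbf{\Theta}^H\mathbf{b}=(\mathbf{\Theta}\mathbf{a})^H\mathbf{b}=(\mathbf{A}\pmb{\theta})^H\mathbf{b}=\pmb{\theta}^H\mathbf{A}^H\mathbf{b}$, which is the claimed relation. For part (b), I would apply the same substitution on both sides of the matrix $\mathbf{D}$: since $\mathbf{\Theta}\mathbf{a}=\mathbf{A}\pmb{\theta}$, we get
\begin{equation}
\mathbf{a}^H\mathbf{\Theta}^H\mathbf{D}\mathbf{\Theta}\mathbf{a}=(\mathbf{\Theta}\mathbf{a})^H\mathbf{D}(\mathbf{\Theta}\mathbf{a})=(\mathbf{A}\pmb{\theta})^H\mathbf{D}(\mathbf{A}\pmb{\theta})=\pmb{\theta}^H\mathbf{A}^H\mathbf{D}\mathbf{A}\pmb{\theta}.
\end{equation}
An alternative, more pedestrian route is to expand everything into scalar sums: the left side of (b) is $\sum_{i,j}\theta_i^* a_i^* D_{ij}\theta_j a_j$, while the right side is $\sum_{i,j}\theta_i^* a_i^* D_{ij} a_j \theta_j$, and these are visibly equal by commutativity of scalar multiplication; I would keep this as a remark in case the reader prefers the componentwise view.

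Honestly, there is no real obstacle here — the statement is a notational lemma whose entire content is the observation $\text{diag}(\pmb{\theta})\mathbf{a}=\text{diag}(\mathbf{a})\pmb{\theta}$. The only thing to be careful about is the conjugation: $\mathbf{\Theta}^H=\text{diag}(\pmb{\theta})^H$ has entries $\theta_i^*$, so when I take $(\mathbf{\Theta}\mathbf{a})^H$ I must make sure I conjugate $\pmb{\theta}$ and not $\mathbf{a}$ in the wrong slot; writing $\mathbf{a}^H\mathbf{\Theta}^H=(\mathbf{\Theta}\mathbf{a})^H$ first and only then substituting avoids any sign/conjugate slip. I would present the argument in three short lines: the entrywise identity $\mathbf{\Theta}\mathbf{a}=\mathbf{A}\pmb{\theta}$, then (a) as a one-line consequence, then (b) by sandwiching $\mathbf{D}$, and conclude.
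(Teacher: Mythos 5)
Your proof is correct and complete: the single identity $\mathbf{\Theta}\mathbf{a}=\mathbf{A}\pmb{\theta}$ (immediate entrywise, since $\theta_i a_i = a_i\theta_i$) does indeed yield both claims by taking conjugate transposes for (a) and sandwiching $\mathbf{D}$ for (b), and your care with where the conjugation lands is exactly the right point to watch. The paper itself states Theorem~\ref{thm-2} without any proof, treating it as a standard notational lemma, so your derivation simply supplies the routine verification the authors omitted; there is no alternative argument in the paper to compare against.
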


According to \textbf{Theorem} \ref{thm-2}, the optimization variable $\mathbf{\Psi}$ in the objective function (\ref{FP_transformed_objective_function2}) can be separated from other matrix such that
\begin{align}\label{temp}
	&\sum_{k=1}^K 2 \mathfrak{R} \bigg\lbrace 
	\pmb{\psi}^H \varepsilon_k^{*} \sqrt{\mu_k \left(1+\alpha_k\right)} 
	\text{diag}\left(\mathbf{h}_{r,k}^H\right) \mathbf{G} \mathbf{t}_k 
	\bigg\rbrace \notag \\
	&\quad - \sum_{k=1}^K 2 \mathfrak{R} \bigg\lbrace 
	\pmb{\psi}^H |\varepsilon_k|^2 
	\text{diag}\left(\mathbf{h}_{r,k}^H\right) \mathbf{G} 
	\sum_{i=1}^K \left(\mathbf{t}_i \mathbf{t}_i^H\right) 
	\mathbf{h}_{d,k} 
	\bigg\rbrace \notag \\
	&\quad - \sum_{k=1}^K 
	\pmb{\psi}^H |\varepsilon_k|^2 
	\text{diag}\left(\mathbf{h}_{r,k}^H\right) \mathbf{G} 
	\sum_{i=1}^K \left(\mathbf{t}_i \mathbf{t}_i^H\right) 
	\mathbf{G}^H \text{diag}\left(\mathbf{h}_{r,k}\right) \pmb{\psi} \notag \\
	&\quad - \sum_{k=1}^K 
	\pmb{\psi}^H |\varepsilon_k|^2 \sigma_{1}^2 
	\text{diag}\left(\mathbf{h}_{r,k}^H\right) 
	\text{diag}\left(\mathbf{h}_{r,k}\right) \pmb{\psi}.
\end{align}

To further simplify the expression (\ref{temp}), we combine terms with the same order and introduce symbols $\mathbf{v}$ and $\mathbf{U}$ to represent the coefficients of the linear and quadratic terms, respectively, which yields
\begin{subequations}
\begin{align}
	\mathbf{v} &= \sum_{k=1}^K \bigg(
	\varepsilon_k^{*} \sqrt{\mu_k \left(1+\alpha_k\right)} 
	\text{diag}\left(\mathbf{h}_{r,k}^H\right) \mathbf{G} \mathbf{t}_k \notag \\
	&\quad - |\varepsilon_k|^2 
	\text{diag}\left(\mathbf{h}_{r,k}^H\right) \mathbf{G} 
	\sum_{j=1}^K \left(\mathbf{t}_i \mathbf{t}_i^H\right) \mathbf{h}_{d,k} 
	\bigg), \\
	\mathbf{U} &= \sum_{k=1}^K \bigg(
	|\varepsilon_k|^2 
	\text{diag}\left(\mathbf{h}_{r,k}^H\right) \mathbf{G} 
	\sum_{i=1}^K \left(\mathbf{t}_i \mathbf{t}_i^H\right) \mathbf{G}^H 
	\text{diag}\left(\mathbf{h}_{r,k}\right) \notag \\
	&\quad - |\varepsilon_k|^2 \sigma_{1}^2 
	\text{diag}\left(\mathbf{h}_{r,k}^H\right) 
	\text{diag}\left(\mathbf{h}_{r,k}\right)
	\bigg).
\end{align}

\end{subequations}
\noindent Therefore, the objective function (\ref{FP_transformed_objective_function}) can then be expressed in a more concise form as
\begin{equation}
	2 \mathfrak{R} \left \lbrace \pmb{\psi}^H \mathbf{v} \right\rbrace - \pmb{\psi}^H \mathbf{U} \pmb{\psi}.
	\label{eq36}
\end{equation}
On the other hand, it is observed that the constraint of (\ref{FP}{b}) is also a quadratic term, so it can be handled using \textbf{Theorem} \ref{thm-2}. As a result, it can be rewritten in a standard quadratic constraint form as
\begin{equation}
	\pmb{\psi}^H \mathbf{\Pi} \pmb{\psi} \leq P_a,
	\label{eq38}
\end{equation}
where $\mathbf{\Pi}$ is a semi-definite matrix given by
\begin{equation}
	\mathbf{\Pi} = \sum_{k=1}^K \text{diag}\left(\mathbf{G} \mathbf{t}_k \right) \text{diag}\left(\mathbf{G} \mathbf{t}_k \right)^H + \sigma_{1}^2 \mathbf{I}.
\end{equation}

With (\ref{eq36}) and (\ref{eq38}), the optimization problem can be eventually given as
\begin{subequations}\label{phi_obj2}
	\begin{align} 
		\max _{\pmb{\psi}} &~   2 \mathfrak{R} \left \lbrace \pmb{\psi}^H \mathbf{v} \right\rbrace - \pmb{\psi}^H \mathbf{U} \pmb{\psi}  \\ 
		& s.t.~ \pmb{\psi}^H \mathbf{\Pi} \pmb{\psi} \leq P_a. \tag{\ref{phi_obj2}{b}}
	\end{align}
\end{subequations}
The problem (\ref{phi_obj2}) has the standard form of a convex QCQP problem, which can be solved directly using the CVX solver. After obtaining the optimal $\pmb{\psi}$, we can compute the optimal $\mathbf{\Psi}$ as $\mathbf{\Psi} = \text{diag}(\pmb{\psi})$. The amplification coefficients matrix $\mathbf{A}$ and the phase shift matrix $\mathbf{\Theta}$ can then be computed as $\mathbf{A} = |\mathbf{\Psi}|$ and $\mathbf{\Theta} = \angle \mathbf{\Psi}$, respectively. We summarize the overall algorithm in \textbf{Algorithm 2}.

\begin{algorithm}[!t] 
	\caption{Proposed AS and Beamforming algorithm} 
	\begin{algorithmic}[1] 
		\REQUIRE ~~ 
		Channels ${\mathbf{G}}$, ${\mathbf{h}}_{r,k}$, and ${\mathbf{h}}_{d,k}$, power budget ${P_s}$ and ${P_a}$, target angle ${\phi}$, radar power ratio ${\eta}$, , communication priorities $\mu_{k}$,noise covariance $\sigma_k^2$ and $\sigma_{1}^2$;
		\ENSURE ~~ 
		Optimized $\mathbf{T}$, $\mathbf{A}$, $\mathbf{\Theta}$, sum-rate $R$, radar probing power $P_r$, ${C}$;
		\STATE Initialize $\mathbf{T}$, $\mathbf{A}$, $\mathbf{\Theta}$, $\pmb{\varepsilon}$;
		\STATE Select an antenna subset with \textbf{Algorithm 1};
		\WHILE {no convergence of $R_{\text{sum}}$}
		\STATE Calculate ${\mathbf{s}^{MMSE}}$ by (\ref{sMMSE});
		\STATE Calculate ${\pmb{\omega}}$ by (\ref{omega});
		\STATE Obtain ${\mathbf{W}}$ by solving (\ref{WMMSE4}) and approximate ${\mathbf{T}}$ with rank-one decomposition;
		\STATE Update ${\pmb{\alpha}}$ by solving (\ref{update_alpha});
		\STATE Update ${\pmb{\varepsilon}}$ by solving (\ref{update_epsilon});
		\STATE Update ${\mathbf{\Psi}}$ by solving (\ref{phi_obj2});
		\ENDWHILE	
		\STATE Obtain $\mathbf{A}$ and $\mathbf{\Theta}$ from ${\mathbf \Psi}$;
		\RETURN Optimized $\mathbf{T}$, $\mathbf{A}$, $\mathbf{\Theta}$, $R$, ${P_a}$, $C$. 
	\end{algorithmic}
\end{algorithm}

\subsection{Convergence and Complexity Discussions} \label{convergence complexity}
In this subsection, we analyze the convergence and complexity of the proposed algorithm, including \textbf{Algorithm 1} and \textbf{Algorithm 2}. For this convergence, the convergence of this algorithm is not affected by the Cuckoo Search-based AS \cite{cuckoo}, and the Algorithm 2 uses an iterative alternating optimization routine, where the transmit and reflective beamforming variables are optimized alternatively  in each iteration, i.e. 
\begin{align*}
	\dots \mathbf{T}^k \longrightarrow \mathbf{\Psi}^k \longrightarrow \mathbf{T}^{k+1} \longrightarrow \mathbf{\Psi}^{k+1} \dots .
\end{align*}
Since each step of the iteration, namely (\ref{WMMSE4}) and (\ref{phi_obj2}), are convex problems the convergence of the alternating optimization algorithm is assured according to \cite{convex_boyd}. 

The computational complexity of the proposed algorithm is primarily determined by the combined contributions of Algorithm 1 and Algorithm 2.

The complexity of the Cuckoo Search-based AS algorithm primarily depends on the WSR fitness function. According to (\ref{WSR}), the complexity of the fitness function WSR is $\mathcal{O}\left(K(K M_s(1+N^2+N)+2N^2+M_s) \right)\simeq\mathcal{O}\left(M_sK^2N^2\right)$
During each iteration, the solution set is updated using both the Lévy flight (\ref{cuckoo_update}) and local random search (\ref{eq:Tupdate}). As a result, each iteration requires two computations of the WSR. The total complexity is therefore $\mathcal{O}\left(I_aLM_sK^2N^2\right),$ where $L$ represents the size of the initialized population, and $I_a$ denotes the number of iterations of the algorithm.

The complexity of Algorithm 2 is determined by the updates of $\mathbf{T}$ and $\mathbf{\Psi}$. When designing $\mathbf{T}$, the complexity primarily involves solving the SDR problem (\ref{WMMSE4}) and performing the eigenvalue decomposition, which are $\mathcal{O}\left(({M_s}+1)^{4.5} \log(1/\epsilon)\right)\simeq\mathcal{O}\left(M_s^{4.5} \log(1/\epsilon)\right)$ and $\mathcal{O}\left(K(M_s+1)^3\right)\simeq\mathcal{O}\left(K {M_s}^3\right)$ respectively, wherein $\epsilon$ denotes the desired solution accuracy. Similarly, when obtaining $\mathbf{\Psi}$, the complexity mainly results from the QCQP problem, which incurs the complexities of $\mathcal{O}\left(N^{3} \log(1/\epsilon)\right)$. Overall, the computational complexity for the Algorithm 2 is $\mathcal{O}\left(I_b(({M_s}^{4.5} \log(1/\epsilon)+K {M_s}^3+N^3\log(1/\epsilon))\right),$
where $I_b$ represents the number of iterations in Algorithm 2.
Therefore, the overall complexity of the proposed system algorithm is the sum of the complexities of Algorithm 1 and Algorithm 2, which is 
$\mathcal{O}\left(I_aLM_sK^2N^2+I_b({M_s}^{4.5} \log(1/\epsilon)+K {M_s}^3+N^3\log(1/\epsilon))\right).$

\section{Simulation Results}\label{simulation}
In this section, numerical results are presented to demonstrate the effectiveness of the proposed algorithm for the A-RIS-assisted DFRC system with the cuckoo search-based AS scheme. In the system illustrated in Fig. \ref{simulation_mod}, the DFRC is located at ($0$m, $0$m), while the A-RIS is positioned at ($150$m, $0$m). The DFRC transmits communication symbols to $K\!=\!4$ downlink users located in a region centered at ($150$m, $10$m) with a radius of $5$m. The total system power budget is $P\!=\!20$ dBm, and the noise power levels are defined as $\sigma^2\!=\!-20$ dBm and $\sigma_{1}^2\!=\!-40$ dBm. In addition, the channels related to the RIS, including $\mathbf{G}$ and $\mathbf{h}_{r,k}$, are modeled as Rician channels following the approach in \cite{JJ_EE_TVT}, while the channels between the users and the DFRC, $\mathbf{h}_{d,k}$,  are assumed as Rayleigh fading channels. 

\begin{figure}[h]
	\centering
	\includegraphics[width=8.89cm, height=6.67cm]{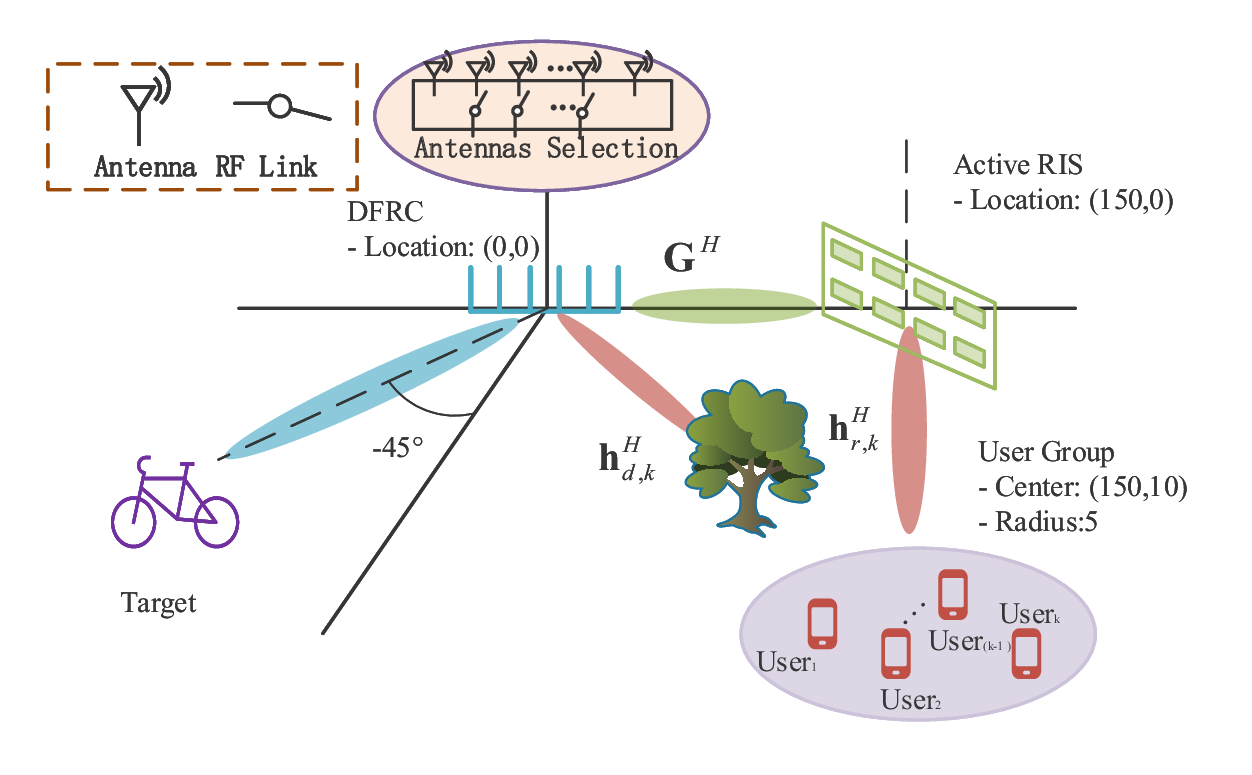}
	\caption{The simulated A-RIS aided DFRC scenario comprising of an DFRC BS with $M$-antenna and $M_s$  RF chains, an $N$-element active RIS, $4$ single antenna users, and one target.}
	\label{simulation_mod}
\end{figure}

To clarify the power allocation in the system and facilitate the demonstrations in the following sections, we present the relationship between symbols related to power and power allocation. The total system power is denoted by $P$, which can be divided into the power budget at the DFRC, denoted by $P_s$, and the power budget at the A-RIS, denoted by $P_a$. The power split ratio between the A-RIS and the DFRC is denoted as $\rho$, such that $P_s=\rho P$ and $P_a=(1-\rho) P$. Moreover, the power of the DFRC $P_s$ can be further divided into the power for radar, denoted by $P_r$, and the power for communication, denoted by $P_c$. The power split ratio for radar in the DFRC is denoted by $\eta$, such that $P_r = \eta P_s$. Therefore, the total power assigned for radar detection is given by $P_r = \eta \rho P$. Unless otherwise specifications, the radar detection power ratio $\eta$ and the power split ratio of the total system power for DFRC/BS are fixed at $0.75$ and $0.9$, respectively. Furthermore, to demonstrate the effectiveness and superiority of the proposed algorithm, three approaches are selected for comparison. The first approach involves the transmitter being equipped with an array whose number of antennas matches that of the AS case. The second approach considers a random AS scenario. The final approach optimizes the original array without any antenna selection.

\begin{figure}[t]
	\centering
	\renewcommand{\figurename}{Fig.} 
	\includegraphics[width=8.89cm, height=6.67cm]{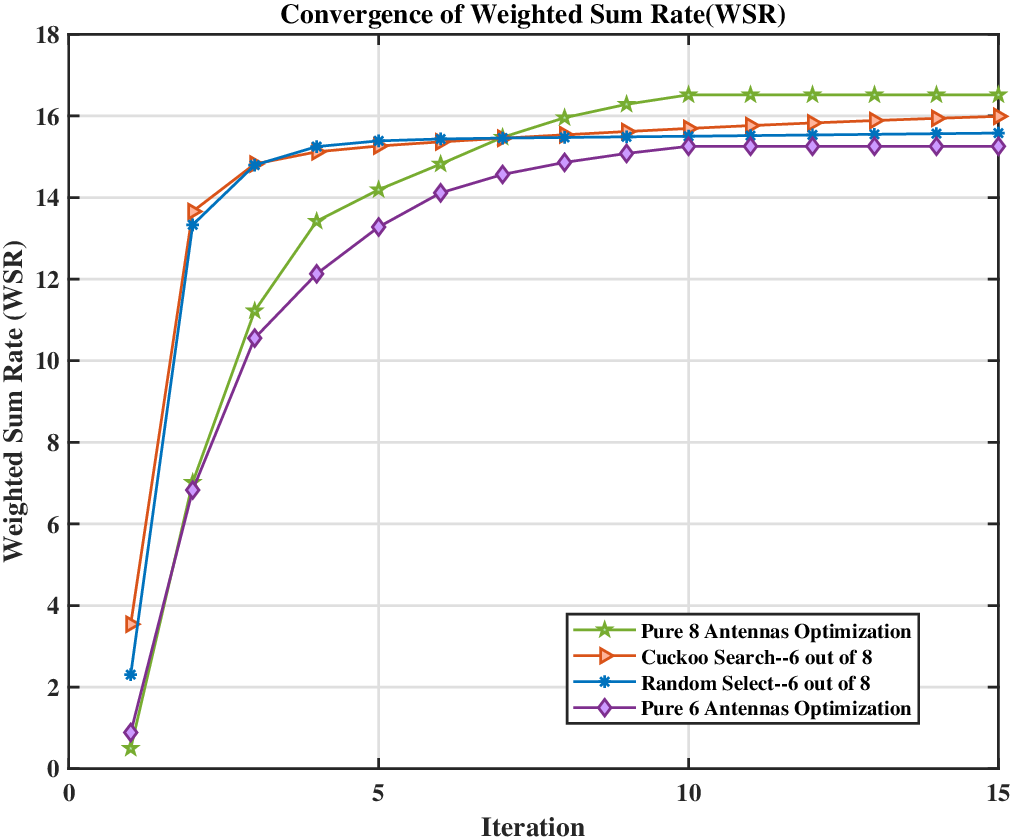} 
	\caption{Convergence behaviour of the proposed algorithm: WSR versus the iteration number.}
	\label{convergence} 
	\renewcommand{\figurename}{Figure} 
\end{figure}

In Fig. \ref{convergence}, the convergence behavior of the proposed joint AS and beamforming algorithm is presented, along with its comparison to three benchmark cases. The total system power $P$ is assumed to be $20$ dBm for all cases. The total number of antennas is $8$, with $6$ antennas selected.  In the scenario, the power split ratio between the A-RIS and DFRC is set to $0.9$. It can be recognized from Fig.\ref{convergence} that all the scenarios can converge. Compared to the case where the system is directly equips with $6$ antenna and the random search case, the WSR utilizing the cuckoo AS has a better performance. This accounts for the fact that the selection provides greater degrees of freedom (DoF) for optimization while the heuristic cuckoo search-based enables finding a superior antenna combination.

\begin{figure}[t] 
	\centering
	\renewcommand{\figurename}{Fig.}
	\includegraphics[width=8.89cm, height=6.67cm]{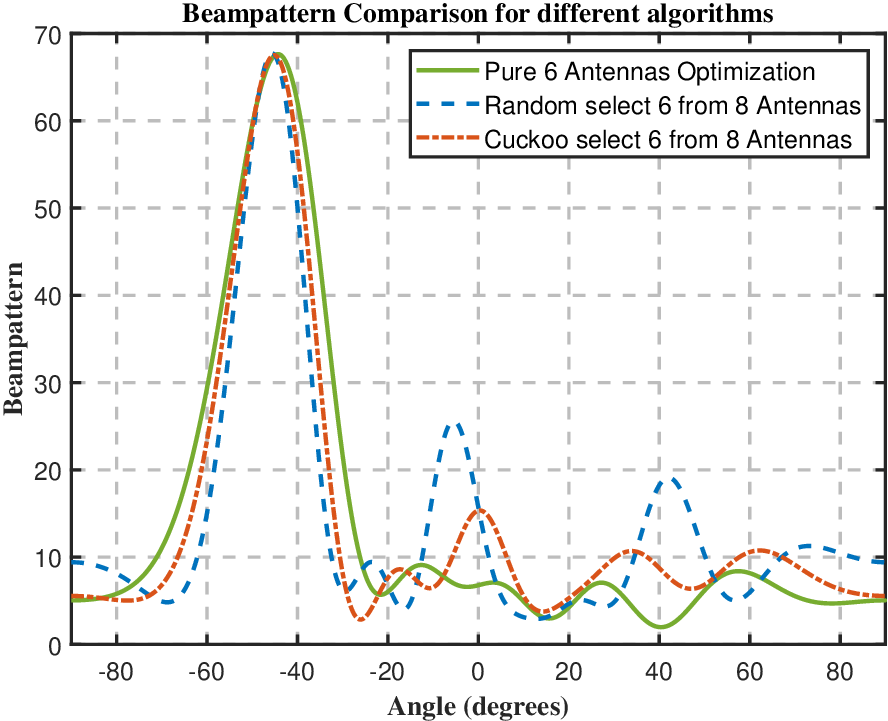}
	\caption{DFRC transmit beampatterns for different algorithms.}
	\label{Bearmpattern}  
	\renewcommand{\figurename}{Figure}
\end{figure}

In Fig.~\ref{Bearmpattern}, the beampatterns of three different algorithms in A-RIS-assisted DFRC systems are presented. In all three scenarios, the total system power  $P$ is fixed at 20 dBm, and the number of reflective elements is 36. The total power budget allocated for radar detection is fixed at 0.75. The power split ratio $\rho$ for the DFRC system is set at 0.9. Both antenna selection methods involve selecting 6 antennas out of a total of 8, while the pure optimization scheme also operates with 6 antennas. It can be observed that in all three cases, the power allocated for the radar function meets the required constraints. Furthermore, the peaks in the direction of the A-RIS verify the effectiveness of the communication. However, it is noteworthy that following antenna selection, the beam sidelobes for both methods are increased. This phenomenon attributes to the disruption of uniformity within the ULA antenna array caused by antenna selection, introducing asymmetry and consequently elevates the sidelobe levels.

\begin{figure}[t]
	\centering
	\renewcommand{\figurename}{Fig.} 
	\includegraphics[width=9.3cm, height=6.7cm]{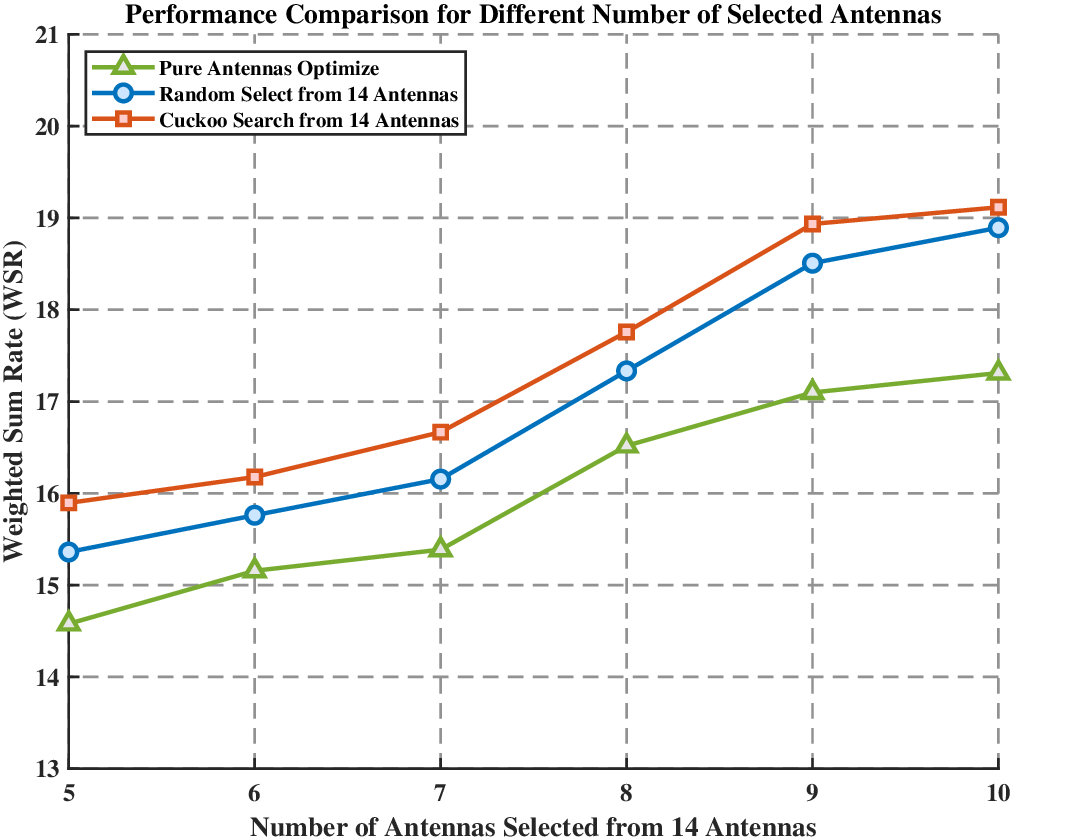} 
	\caption{WSR versus different AS numbers.}
	\label{Antenna_select}
	\renewcommand{\figurename}{Figure} 
\end{figure}

\begin{figure}[t]
	\centering
	\renewcommand{\figurename}{Fig.}
	\includegraphics[width=8.89cm, height=6.67cm]{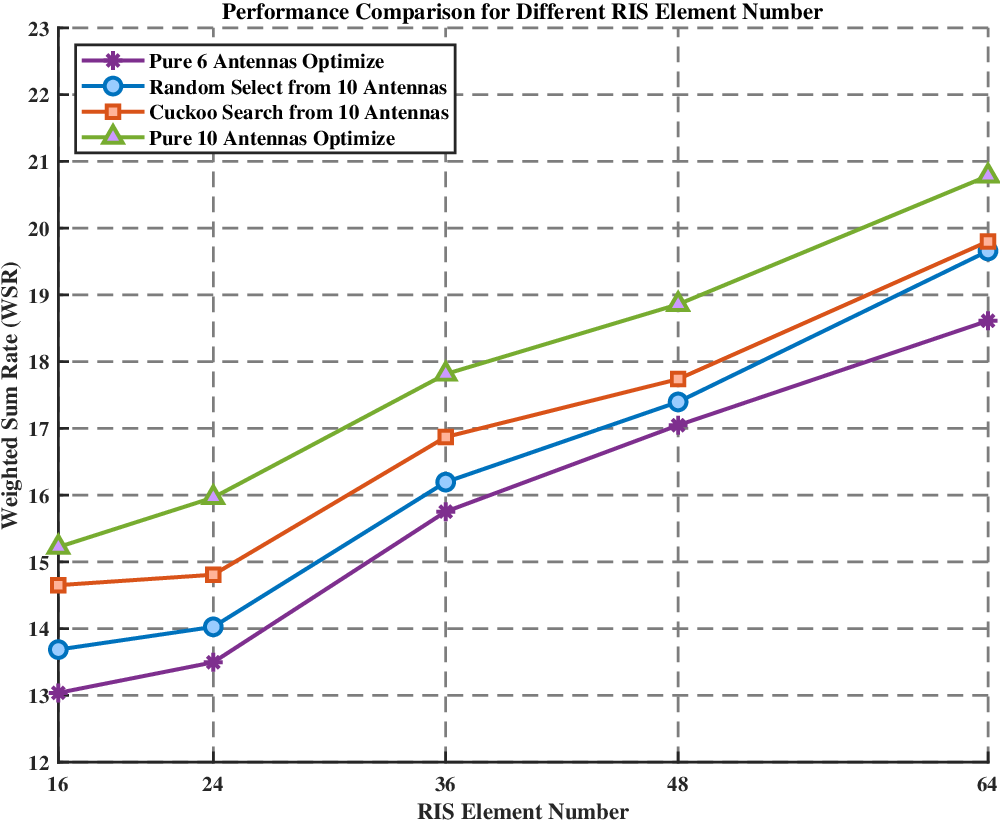}
	\caption{WSR versus the number of reflecting elements $N$.}
	\label{N_effect}  
	\renewcommand{\figurename}{Figure}
\end{figure}

\begin{figure}[t]
	\centering
	\renewcommand{\figurename}{Fig.}
	\includegraphics[width=8.89cm, height=6.67cm]{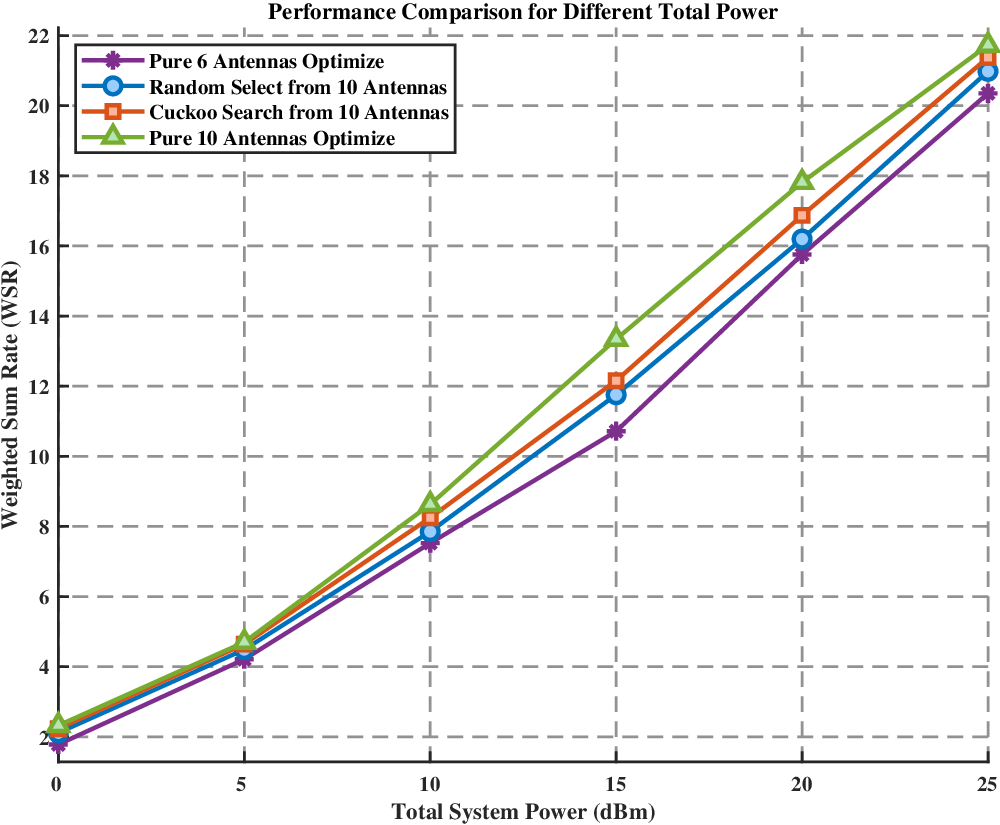}
	\caption{WSR versus total system power $P$.}
	\label{Pt_effect}
	\renewcommand{\figurename}{Figure}
\end{figure}

Fig. \ref{Antenna_select} shows the impact of different number of AS on the WSR. In this experiment, we set the total transmitter antenna number to $14$ and vary the number of selected antennas. As shown in Fig. \ref{Antenna_select}, the WSR increases as the number of selected antennas increases. The reason behind this is that more antennas provide more space diversity yielding a higher WSR. Moreover, under the same number of selected antennas, our proposed cuckoo search-based algorithm achieves the best results followed by random AS, while the case that directly equips the antennas with equal number gives the worst performance. This result is consistent with the previous findings.

Fig. \ref{N_effect} presents the effect of the number of RIS elements on WSR, where the number of reflecting element number varies from $16$ to $64$.  As shown in Fig. \ref{N_effect}, the WSR of all four cases increase as the number of RIS elements increases. This is reasonable since the increase of RIS element enables more DoF to alter the channel conditions. Additionally, consistent with the previous results, the cases involving AS exhibit a higher WSR compared to directly deploying an array with the same number of antennas.

\begin{figure}[t]
	\centering 
	\renewcommand{\figurename}{Fig.} 
	\includegraphics[width=8.89cm, height=6.67cm]{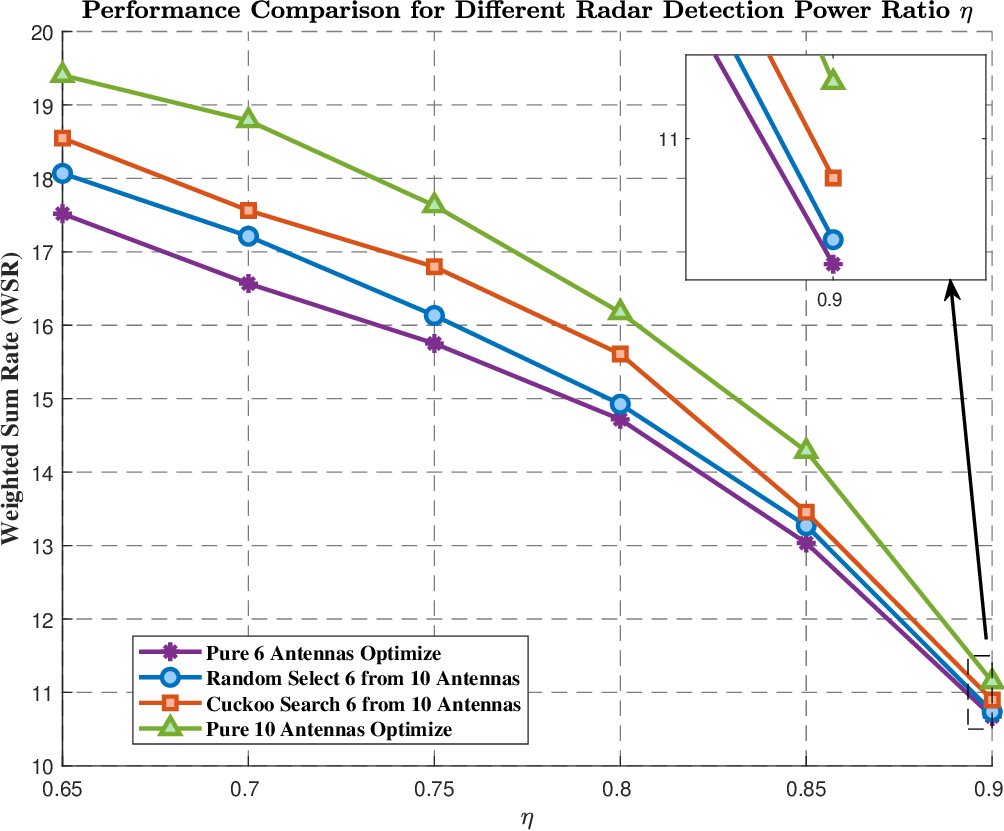}
	\caption{Trade-off between WSR and radar detection power ratio $\eta$.}
	\label{Tradeoff2}
	\renewcommand{\figurename}{Figure} 
\end{figure}

\begin{figure}[t]
	\centering 
	\renewcommand{\figurename}{Fig.} 
	\includegraphics[width=8.89cm, height=6.67cm]{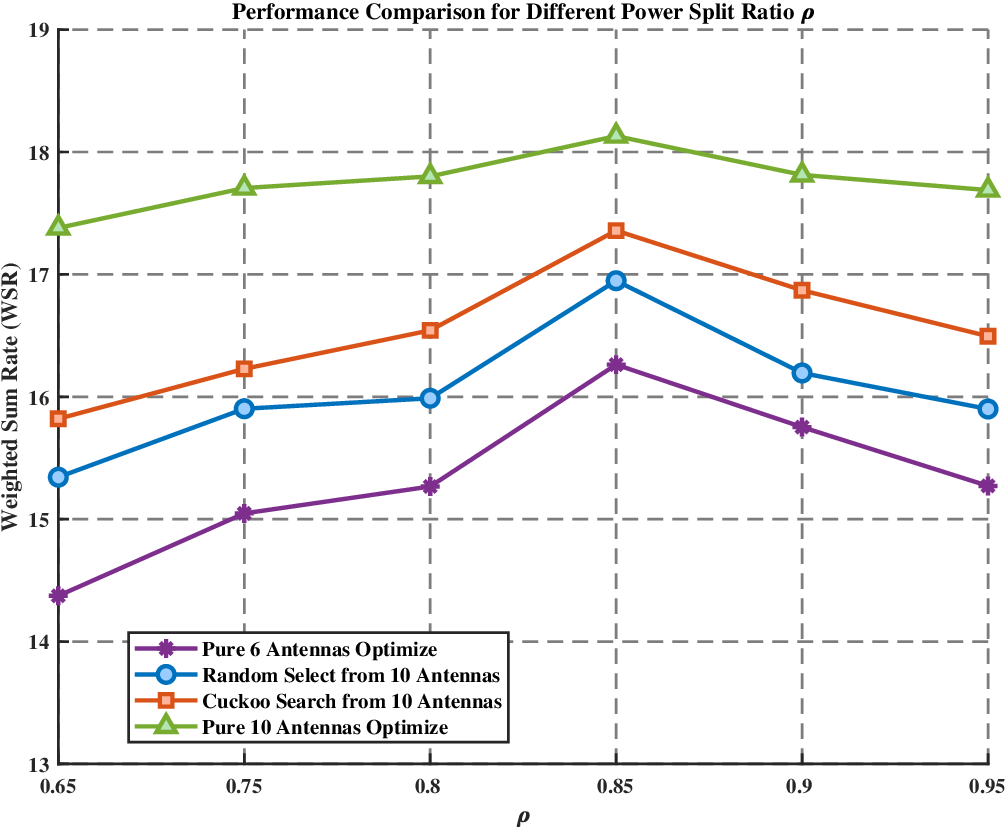}
	\caption{Trade-off between WSR and power split ratio $\rho$.}
	\label{power_split}
	\renewcommand{\figurename}{Figure} 
\end{figure}

Fig. \ref{Pt_effect} illustrates the variation of WSR with increasing system power. In this experiment, the number of RIS elements is fixed at $36$. The results demonstrate that the WSR of all four setups increases as the total system power increases, where the cuckoo algorithm still outperforms the benchmarks. This can be attributed to the fact that, under the same channel conditions, the receiver can achieve higher signal strength and SINR when provided with more transmit power. And the channel corresponding to the antenna chosen by the cuckoo is always superior.

Fig. \ref{Tradeoff2} illustrates the trade-off between required radar detection power and WSR. The total system power is set as $20$ dBm and the number of RIS elements is $36$. The power split ratio $\rho$ for DFRC in the A-RIS case is set to $0.9$. The radar detection power ratio $\eta$ of DFRC ranges from $0.1$ to $0.9$, and the radar detection power is calculated using $P_r = \eta \rho P$. It can be observed that when the required radar detection power increases, the WSR decreases proportionally. This suggests that increasing the power allocated to radar detection in DFRC comes at the cost of reducing the power available for communication, which leads to a lower WSR.

Fig. \ref{power_split} shows the relationship between the power split ratio parameter $\rho$ and WSR. The total system power is fixed at $20$ dBm. Since the total system power $P$ and radar detection power $P_r$ are fixed, $\rho$ can only range from $0.6$ to $1$. It can be observed that WSR initially increases with $\rho$ and then gradually decreases. The decrease in WSR can be attributed to the decrease in power allocated to the A-RIS as $\rho$ increases, which reduces the ability to mitigate the multiplicative fading effect. On the other hand, when $\rho$ is small, the power transmitted from DFRC is limited, leading to a small WSR. However, as the power transmitted for communication in DFRC increases with a higher $\rho$, the WSR correspondingly increases .

\section{Conclusions}\label{conclusion}
In this paper, we present a study on the A-RIS-aided DFRC system hat combines an AS scheme to enhance communication WSR while maintaining radar sensing performance. A joint AS and beamforming optimization algorithm is proposed by leveraging the cuckoo search-based scheme, WMMSE framework as well as the FP. Simulation results validate the effectiveness of the proposed algorithm and indicate that the A-RIS-aided DFRC system with cuckoo AS can reduce the RF chains without significant performance degradation.





\bibliographystyle{ieeetr}
\bibliography{literature.bib}

\end{document}